\newtheorem{Theorem}{Theorem}
\newtheorem{Remark}{Remark}
\newtheorem{Example}{Example}
\newtheorem{Fact}{Fact}
\begin{document}
%
\title{
A Generic Transformation for Optimal Node Repair in  MDS Array Codes over $\mathbf{F}_2$}

%
%
%
\author{Jie Li,~\IEEEmembership{Member,~IEEE,}  Xiaohu Tang,~\IEEEmembership{Senior~Member,~IEEE,} and Camilla Hollanti,~\IEEEmembership{Member,~IEEE,}

\thanks{The work of J. Li was supported in part by the National Science Foundation of China under Grant No. 61801176.  The work of Xiaohu Tang was
supported in part by the National Natural Science Foundation of China
under Grant No. 61941106 and 61871331. The work of C. Hollanti was supported by the Academy of Finland, under Grants No. 318937 and 336005.}
\thanks{J. Li was with the Hubei Key Laboratory of Applied Mathematics, Faculty of Mathematics and Statistics, Hubei University, Wuhan 430062, China, and also with the Department of Mathematics and Systems Analysis, P.O. Box 11100, 
                    Aalto University, FI-00076 Aalto (Espoo),  Finland (e-mail: jieli873@gmail.com).}
                    \thanks{X. Tang is with the Information Security and National Computing Grid Laboratory, Southwest Jiaotong University, Chengdu 610031, China (e-mail: xhutang@swjtu.edu.cn).}
        \thanks{C. Hollanti is with the Department of Mathematics and Systems Analysis, P.O. Box 11100, 
                    Aalto University, FI-00076 Aalto (Espoo),  Finland (e-mail: camilla.hollanti@aalto.fi).}
}

%
%

\markboth{IEEE Transactions on Communications}%
{Submitted paper}
%



\maketitle

\begin{abstract}
For high-rate linear systematic maximum distance separable (MDS)  codes,
most early  constructions could initially optimally repair all the systematic nodes but not all the parity nodes. Fortunately, this issue was  first solved by Li \textit{et al.} in (IEEE Trans. Inform. Theory, 64(9), 6257-6267, 2018), where  a   transformation that can convert any  nonbinary   MDS array code into another one with desired properties was proposed. However, the transformation does not work for binary MDS array codes. In this paper, we address this issue by  proposing  another generic transformation that can convert any $[n, k]$ binary    MDS array code into a new one, which endows any $r=n-k\ge2$ chosen nodes with optimal repair bandwidth and optimal rebuilding access properties, and at the same time, preserves the normalized  repair bandwidth/rebuilding
access for the remaining $k$ nodes under some conditions. As two
immediate  applications, we show that
1) by applying the transformation multiple times, any   binary     MDS array code can be converted into one with    optimal rebuilding access for all nodes, 2) any binary MDS array code with optimal repair bandwidth
or  optimal rebuilding access for the systematic nodes    can
be converted into one with the corresponding 
optimality property for all nodes.
\end{abstract}

\begin{IEEEkeywords}
Binary MDS array codes, distributed storage, high-rate,  optimal rebuilding access, optimal repair bandwidth.
\end{IEEEkeywords}

%
\IEEEpeerreviewmaketitle

\section{Introduction}\label{sec:Intro}
%
%
%
%
\IEEEPARstart{I}{n}   distributed storage systems, one of the major concerns  is  reliability. A common way to fulfill reliability is by introducing redundancy.
Normally, MDS codes such as Reed-Solomon (RS) codes \cite{RS-codes},  can offer maximum reliability for a given storage overhead,   thus they have been
used extensively as the basis for RAID systems and  distributed storage systems  \cite{RS-codes,EVENODD,Blaum_Bruck_Vardy96,EVENODDG,STAR}.

Upon failure of a single storage node, a self-sustaining distributed storage system must  possess the ability to  repair the failed node. For example, to accomplish this task,
 the repair process of the classical MDS codes is  to first reconstruct the
original file by accessing and downloading an amount of data equal to the size of the original file, and then repair the failed node, which is called  \textit{naive repair}.
However,  such repair is overly excessive, and poses the question whether we can minimize the \textit{rebuilding access} and the \textit{repair bandwidth},
i.e., the amount of data that needs to be accessed (or read) and downloaded to repair a failed node, respectively.

The seminal work in \cite{Dimakis} gave a positive answer to the above question. 
It was shown in \cite{Dimakis} that the minimum repair bandwidth of an $[n, k]$ MDS code $\mathcal{C}$ defined over $\mathbf{F}_q^{\alpha}$ (which is referred to as an \textit{array code} \cite{array_codes} if $\alpha>1$) is $\gamma^*(d)\triangleq\frac{d}{d-k+1}\alpha$, where $d\in [k, n)$ stands for the number of helper nodes contacted  during the repair procedure. MDS codes attaining the minimum repair bandwidth are said to have \textit{optimal repair bandwidth} and are referred to as  minimum storage regenerating (MSR) codes.  Subsequently, the optimal rebuilding access was also established  in \cite{Barg1,tight-bound-on-alpha},
where an MDS code is said to possess \textit{optimal rebuilding access}  if the amount of data accessed meets the minimum repair bandwidth during the repair process of a failed node. 
Since 2010,  various MDS array codes  with optimal repair bandwidth
have been proposed in the literature \cite{Goparaju,product,Suh,hadamard,Hadamard-strategy,Sasidharan-Kumar2,extend-zigzag,Zigzag,repair-parity-zigzag,invariant-subspace,Etzion,Long-IT,MDR,MDR-M,EEGad,Barg1,Barg2,elyasi2019cascade,han2015update}, where most works \cite{Suh,hadamard,Hadamard-strategy,extend-zigzag,Zigzag,repair-parity-zigzag,invariant-subspace,Etzion,Long-IT,MDR,MDR-M,EEGad} consider the case $d=n-1$ to maximally reduce the repair bandwidth, as $\gamma^*(d)$ is a decreasing function of $d$;  the setting $d=n-1$ is also the focus of this work, then the optimal repair can be achieved by downloading $\frac{\alpha}{n-k}$ elements from each surviving nodes as
\begin{equation}\label{Eqn-optRB}
\gamma^*(n-1)=\frac{n-1}{n-k}\alpha.
\end{equation}

It is worth noting that a symbol in $\mathbf{F}_q^b$ can be regarded as a symbol in $\mathbf{F}_{q^b}$, and vice versa. Thus clearly, every linear code over $\mathbf{F}_{q^b}$ can be regarded as an $\mathbf{F}_q$-linear code over  $\mathbf{F}_q^b$, but the converse is not true \cite{blaum1999lowest}. RS codes over $\mathbf{F}_{q^b}$ are examples of $\mathbf{F}_q$-linear MDS array codes over $\mathbf{F}_q^b$. Repairing RS codes has also been attracting a lot of attention recently, however, in general, they either cannot attain optimal repair bandwidth or should be constructed over a large finite field to achieve optimal repair bandwidth.  Interested readers could refer to \cite{guruswami2017repairing,chen2020enabling} and the references therein for more details.


For high-rate (i.e., $k/n > 1/2$) MDS array codes, most early systematic code constructions can only optimally repair all the systematic nodes but not all the parity nodes \cite{hadamard,Hadamard-strategy,Zigzag,invariant-subspace,Etzion,Long-IT,MDR,MDR-M,EEGad
}. Fortunately, this issue was  first solved by Li \textit{et al.} \cite{transform-IT,transform-ISIT17,transform-ISIT18}, where  a   transformation that can enable optimal node repair for any $n-k$ chosen nodes of a  nonbinary $[n, k]$   MDS  code was proposed, and at the same time, preserves the repair efficiency
for the remaining $k$ nodes. 
It is worth noting that
independent and parallel to the works in \cite{transform-IT,transform-ISIT17,transform-ISIT18}, Ye \textit{et al.}  \cite{Barg2}  and  Sasidharan \textit{et al.} \cite{Sasidharan-Kumar2}  respectively proposed  explicit constructions of high-rate MDS array codes, which are equivalent in essence, and have the same performance as the ones obtained from the second application in \cite{transform-IT,transform-ISIT17,transform-ISIT18} (see  \cite{tight-bound-on-alpha} for the discussion on these discoveries).

The generic transformation   proposed  in \cite{transform-IT,transform-ISIT17,transform-ISIT18} has wide potential applications,  however,  it  does not work for binary MDS array codes.
In this paper, we aim to address the unsolved problem in \cite{transform-IT} by presenting a new generic transformation that can convert any  binary   MDS array code into another one, which endows any $r=n-k\ge 2$ chosen nodes with optimal rebuilding access, and at the same time, preserves the repair efficiency 
for the remaining $k$ nodes under some conditions.
As two
immediate applications  of this transformation, we show that
1) any   binary MDS array code can be converted into one with optimal rebuilding access for all nodes, 2) any binary   MDS array code with optimal repair bandwidth
or optimal rebuilding access for the systematic nodes  only can
be converted into one with the corresponding 
optimality property for all nodes. In fact, the second  application  can also be applied to any binary MDS array code with an efficient but not optimal repair   for the systematic nodes, such as the MDS array codes in  \cite{EVENODD,hou20120two},  to   enable optimal repair of all the parity nodes while keeping the repair efficiency of the systematic nodes.    The proposed transformation is also applicable to nonbinary MDS array codes and RS codes, but the performance is no better than those in \cite{transform-IT,transform-ISIT17,transform-ISIT18}.

We note that   another transformation  for binary MDS array codes was proposed recently  in \cite{HouLee}, which has a similar flavor as the one proposed in this paper.
The main technique used  in the transformation  in \cite{HouLee} is similar to that in \cite{transform-IT}, i.e., by operating on multiple instances of a base code and pairing data. 
The difference is that  in \cite{HouLee}  the data stored in each node is regarded as a polynomial of degree less than the sub-packetization level
and thus   the pairing step is carried out in a polynomial ring.
Compared with the one in \cite{HouLee}, the generic transformation proposed in this paper has several advantages, such as providing a uniform procedure, having a wide potential for applications, and enabling low complexity. Detailed comparisons will be carried out  in Section \ref{sec:comp}. Very recently, as a variant of the transformation in \cite{HouLee}, another transformation that was tailored for  Zigzag-decodable reconstruction (ZDR) codes was proposed in \cite{hou2020zigzag}. ZDR codes are not MDS, hence they are left out of consideration in this paper. Although it was mentioned that the recent transformation in \cite{hou2020zigzag} can also be applied to binary MDS array codes, however, it is only applicable to those whose repair matrices satisfy a strict condition (cf. \cite[Theorem 7]{hou2020zigzag}), which greatly limits its application and makes it inferior to the transformation in \cite{hou2020zigzag} in terms of the range of the application to binary MDS array codes.




The remainder of the paper is organized as follows. Section
II gives some necessary preliminaries. The generic transformation is given in Section III, followed by the proofs of
the asserted properties. Two specific  applications    of the transformation as well as comparisons  are discussed in Section IV. Finally, Section V draws the conclusion.


%

\section{Preliminaries}

An $[N, K, D]$ \textit{(linear) code} over $\mathbf{F}_q$ is   a subspace of $\mathbf{F}_q^N$, where the minimum Hamming distance $D$ is the smallest weight of the nonzero codewords or nonzero vectors in the subspace. The  Singleton bound implies that $D$ satisfies $D\le N-K+1$ for any linear code, linear codes with $D= N-K+1$ are called MDS codes and denoted as $[N, K]$ MDS codes as the minimum distance is clear \cite{ling2004coding}.
An $[n, k]$ array code $\mathcal{C}$ over $\mathbf{F}_q^b$ is said to be $\mathbf{F}_q$\textit{-linear} if $\mathcal{C}$ is a vector space over $\mathbf{F}_q$ of dimension $nb$, and is MDS if the Hamming distance equals $n-k+1$, where the Hamming distance  is measured respect to the symbols in $\mathbf{F}_q^b$.  
An array code can be specified by  either its parity-check matrix $H$ of size $(n-k)b\times nb$ or its generator matrix $G$ of size $kb\times nb$, both over $\mathbf{F}_q$. The code is \textit{systematic} if $H$ contains $\mathbf{I}_{(n-k)b}$ or $G$ contains   $\mathbf{I}_{kb}$, where
$\mathbf{I}_{m}$   denotes the identity matrix of order  $m$ \cite{blaum1999lowest}.

In this paper, we only focus on binary MDS array codes, i.e.,  $\mathbf{F}_2$-linear MDS code  over $\mathbf{F}_2^{\alpha}$ for some integer $\alpha$, which is called the \textit{sub-packetization level}, thus all the scalars, vectors, and matrices are binary.
First of all, we fix  some notations, which will  be used throughout this paper.
\begin{itemize}
  \item  ``$+$" is just the XOR operation in ``$a+b$" for $a,b\in \mathbf{F}_2$.
  \item  A lowercase letter  in bold (for example, $\mathbf{a}$)  denotes a binary column vector.
\item   $[i,j)$  denotes the set  $\{i,i+1,\cdots,j-1\}$ for any two integers $i<j$.
\end{itemize}

Assume the original file to be stored is comprised of $M=k\alpha$ symbols over $\mathbf{F}_2$. Encoding the original file by an  $[n=k+r,k]$  MDS array code over $\mathbf{F}_2^{\alpha}$, $n$ column vectors $\mathbf{f}_0, \mathbf{f}_1, \cdots, \mathbf{f}_{n-1}$ of length $\alpha$ are obtained. As common practice (including in the seminal work \cite{Dimakis}), we assume that the data in these $n$ column vectors is dispersed
across $n$ storage nodes, with each node storing one column vector.
If the code is systematic, then the  $k$ nodes storing the original file are named the \textit{systematic nodes} while the remaining nodes
are referred to as the \textit{parity nodes}. 

An $[n,k]$ MDS array code possesses \textit{the MDS property} that the original file can be reconstructed by contacting any $k$ out of the $n$ nodes, and is preferable  to have optimal repair bandwidth, i.e.,  a failed node can be regenerated by downloading $\alpha/r$ symbols from each surviving node \cite{Dimakis}. Generally, the data downloaded from node $j$ to repair node $i$ can be represented by $S_{i,j}\mathbf{f}_j$, where $S_{i,j}$ is an $\alpha\times \alpha$ matrix with its rank indicating the amount of data that should be downloaded,  and $S_{i,j}$ is usually referred to as \textit{repair matrix}. Besides the repair bandwidth, the rebuilding access (also known as repair access in \cite{HouLee}) should also be optimized. Formally, rebuilding access is the amount of data that needs to be accessed (or read) to repair a failed node, which is of course no less than the repair bandwidth. It would be preferable for an MDS array code to have  optimal rebuilding access, which can be achieved if only $\alpha/r$ symbols are accessed at each surviving node \cite{Barg2}, i.e., there are exactly $\alpha/r$ nonzero columns in $S_{i,j}$, where $i,j\in [0, n)$ with $i\ne j$. This appealing property reinforces the repair bandwidth requirement, and can reduce the disk I/O overhead during the repair process.

\begin{table*}[htbp]
\begin{center}
\caption{A $[9,6]$  MDS array  code with optimal rebuilding access for the target nodes}\label{Table stru new}
\setlength{\tabcolsep}{5.4pt}
\begin{tabular}{|c|c|c|l|l|l|}
\hline  RN 0 ($\mathbf{f}_0$)  & $\cdots$ & RN $5$ ($\mathbf{f}_{5}$)  & TN $0$ ($\mathbf{h}_{0}$) &TN $1$ ($\mathbf{h}_{1}$) & TN $2$ ($\mathbf{h}_{2}$) \\
\hline
\hline \multirow{2}{*}{$\mathbf{f}_0^{(0)}$} & \multirow{2}{*}{$\cdots$} & \multirow{2}{*}{$\mathbf{f}_{5}^{(0)}$} & $\mathbf{g}_{0,0}^{(0)}$ & $\mathbf{g}_{1,0}^{(0)}+\mathbf{g}_{0,0}^{(1)}$ & $\mathbf{g}_{2,0}^{(0)}+\mathbf{g}_{0,0}^{(2)}$\\
&&&$\mathbf{g}_{0,1}^{(0)}$& $\mathbf{g}_{1,1}^{(0)}+\mathbf{g}_{0,1}^{(1)}$ & $\mathbf{g}_{2,1}^{(0)}+\mathbf{g}_{0,1}^{(2)}$\\
\hline  \multirow{2}{*}{$\mathbf{f}_0^{(1)}$} & \multirow{2}{*}{$\cdots$} & \multirow{2}{*}{$\mathbf{f}_{5}^{(1)}$} &  $\mathbf{g}_{0,0}^{(1)}+\mathbf{g}_{1,0}^{(0)}+\mathbf{g}_{1,1}^{(0)}$ & $\mathbf{g}_{1,0}^{(1)}$ &  $\mathbf{g}_{2,0}^{(1)}+\mathbf{g}_{1,0}^{(2)}$\\
&&&$\mathbf{g}_{0,1}^{(1)}+\mathbf{g}_{1,0}^{(0)}$ & $\mathbf{g}_{1,1}^{(1)}$ &  $\mathbf{g}_{2,1}^{(1)}+\mathbf{g}_{1,1}^{(2)}$\\
\hline \multirow{2}{*}{$\mathbf{f}_0^{(2)}$} & \multirow{2}{*}{$\cdots$} & \multirow{2}{*}{$\mathbf{f}_{5}^{(2)}$} & $\mathbf{g}_{0,0}^{(2)}+\mathbf{g}_{2,0}^{(0)}+\mathbf{g}_{2,1}^{(0)}$ & $\mathbf{g}_{1,0}^{(2)}+\mathbf{g}_{2,0}^{(1)}+\mathbf{g}_{2,1}^{(1)}$ &  $\mathbf{g}_{2,0}^{(2)}$\\
&&&$\mathbf{g}_{0,1}^{(2)}+\mathbf{g}_{2,0}^{(0)}$ & $\mathbf{g}_{1,1}^{(2)}+\mathbf{g}_{2,0}^{(1)}$ &  $\mathbf{g}_{2,1}^{(2)}$\\
\hline
\end{tabular}
\end{center}
\end{table*}

\section{A Generic Transformation for Binary MDS Array Codes}\label{Sec tran}

In this section,  we present a generic transformation that can convert any $[n,k]$ binary  MDS array code with even sub-packetization level\footnote{Note that most known binary MDS array codes in the literature have even sub-packetization level  \cite{EVENODD,Blaum_Bruck_Vardy96,EVENODDG,STAR,HouLee,EEGad}. Even when the sub-packetization level of a binary MDS array code is odd,  one can combine two instances of such a code in advance so that the sub-packetization level of the resultant code is even.} into a new one, which allows an arbitrary set of $r=n-k$ nodes to have both optimal repair bandwidth and optimal rebuilding access. Choosing some $[n,k]$ binary MDS array code as the base code,  the $r$ nodes which we wish to endow with optimal rebuilding access are named the \textit{target nodes}, while the remaining  $k$ nodes are called the \textit{remainder nodes}. W.L.O.G., we always assume  that the last $r$ nodes are the target nodes unless otherwise stated. To simplify the notation in what follows,  a target node (resp. remainder node) is shortly denoted by TN (resp. RN).  Before describing the generic transformation,  we first present an example  to illustrate the main idea behind it.

\subsection{An Example $[9,6]$ MDS array Code}
\label{ex_step2}
Let $\mathcal{C}_{1}$ be a  $[9,6]$ MDS array code  over   $\mathbf{F}_2^{\alpha'}$, where $\alpha'$ is even.  We generate  three codewords of the given code. To this end, let $(\mathbf{f}_{0}^{(l)},\mathbf{f}_{1}^{(l)},\cdots,\mathbf{f}_{5}^{(l)},\mathbf{g}_{0}^{(l)},\mathbf{g}_{1}^{(l)},\mathbf{g}_{2}^{(l)})$ be a codeword/instance of the base code $\mathcal{C}_{1}$, where $l\in [0,3)$ and the nodes storing $\mathbf{g}_{0}^{(l)},\mathbf{g}_{1}^{(l)},\mathbf{g}_{2}^{(l)}$ are designated as the target nodes. That is, the original file is first encoded into three codewords of the base code. For each code symbol, we divide it into two equal parts. That is, rewrite  $\mathbf{f}_{i}^{(l)}$ ($i\in [0,5)$) and $\mathbf{g}_{j}^{(l)}$ ($j\in [0,3)$) as
\begin{eqnarray*}
\mathbf{f}_i^{(l)}=\left(
                                                                                                                            \begin{array}{c}
                                                                                                                              \mathbf{f}_{i,0}^{(l)} \\
                                                                                                                              \mathbf{f}_{i,1}^{(l)} \\
                                                                                                                            \end{array}
                                                                                                                          \right)
,~\mathbf{g}_j^{(l)}=\left(
                     \begin{array}{c}
                       \mathbf{g}_{j,0}^{(l)} \\
                       \mathbf{g}_{j,1}^{(l)} \\
                     \end{array}
                   \right).
\end{eqnarray*}
By applying the generic transformation,  a new $[9,6]$ MDS array code with sub-packetization level $\alpha=3\alpha'$ is obtained, which is shown in Table \ref{Table stru new}. The new code has  optimal rebuilding access for the $3$ target nodes,  which are denoted as

\begin{equation*}
\mathbf{h}_{i}=\left(
                   \begin{array}{c}
                     \mathbf{h}_i^{(0)} \\
                     \mathbf{h}_i^{(1)} \\
                     \mathbf{h}_i^{(2)} \\
                   \end{array}
                 \right)~~\mbox{and}~~\mathbf{h}_i^{(l)}=\left(
                                                            \begin{array}{c}
                                                              \mathbf{h}_{i,0}^{(l)} \\
                                                              \mathbf{h}_{i,1}^{(l)} \\
                                                            \end{array}
                                                          \right),
                            \end{equation*}             where
                           \begin{equation*}
 \mathbf{h}_{i,j}^{(l)}=\left\{ \begin{array}{ll}
\mathbf{g}_{i,j}^{(i)}, & \textrm{if}~i=l, \\
\mathbf{g}_{i,0}^{(l)}+\mathbf{g}_{l,0}^{(i)}+\mathbf{g}_{l,1}^{(i)},& \textrm{if}~i<l ~\textrm{and}~j=0,\\
\mathbf{g}_{i,1}^{(l)}+\mathbf{g}_{l,0}^{(i)},& \textrm{if}~i<l ~\textrm{and}~j=1,\\
\mathbf{g}_{i,j}^{(l)}+\mathbf{g}_{l,j}^{(i)},& \textrm{if}~i>l,
\end{array}\right. 
\end{equation*}
for $i,l\in[0,3)$.

\textbf{Reconstruction:} W.L.O.G., let us consider reconstructing  the original file by contacting nodes $2$ to $7$ (i.e., $\mathbf{f}_2, \ldots \mathbf{f}_5$, $\mathbf{h}_{0}$, and $\mathbf{h}_{1}$).
According to Table \ref{Table stru new}, we can get the components of $\mathbf{g}_{0}^{(1)}$ and $\mathbf{g}_{1}^{(0)}$ in the following manner.
\begin{eqnarray*}
  &&\mathbf{g}_{1,1}^{(0)}= \mathbf{h}_{0,0}^{(1)}+ \mathbf{h}_{1,0}^{(0)}, ~
  \mathbf{g}_{0,1}^{(1)}= \mathbf{g}_{1,1}^{(0)}+\mathbf{h}_{1,1}^{(0)},\\ 
  &&
  \mathbf{g}_{1,0}^{(0)}=  \mathbf{g}_{0,1}^{(1)}+\mathbf{h}_{0,1}^{(1)}, ~
  \mathbf{g}_{0,0}^{(1)}=\mathbf{g}_{1,0}^{(0)}+\mathbf{h}_{1,0}^{(0)}.
\end{eqnarray*}
In conjunction with the remaining data in  rows 1 and 2 at nodes $2$ to $7$, we now obtain
\begin{align*}
(\mathbf{f}_2^{(0)},\ldots,\mathbf{f}_5^{(0)}, \mathbf{g}_0^{(0)}, \mathbf{g}_1^{(0)}),~
(\mathbf{f}_2^{(1)},\ldots,\mathbf{f}_5^{(1)}, \mathbf{g}_0^{(1)}, \mathbf{g}_1^{(1)}),
\end{align*}
from which $(\mathbf{f}_0^{(0)},\ldots,\mathbf{f}_5^{(0)})$ and $(\mathbf{f}_0^{(1)},\ldots,\mathbf{f}_5^{(1)})$ can be reconstructed, respectively, according to the MDS property of the base code.
Then $\mathbf{g}_2^{(0)}$ and $\mathbf{g}_2^{(1)}$ can   be computed from these available data,  and together with $\mathbf{h}_0^{(2)}$ and $\mathbf{h}_1^{(2)}$, we can obtain $\mathbf{g}_0^{(2)}$ and $\mathbf{g}_1^{(2)}$ since
\begin{equation*}
  \mathbf{g}_0^{(2)}=\left(
                       \begin{array}{c}
                         \mathbf{g}_{0,0}^{(2)} \\
                         \mathbf{g}_{0,1}^{(2)} \\
                       \end{array}
                     \right)=\left(
                       \begin{array}{l}
                         \mathbf{h}_{0,0}^{(2)}+\mathbf{g}_{2,0}^{(0)}+\mathbf{g}_{2,1}^{(0)} \\
                         \mathbf{h}_{0,1}^{(2)}+ \mathbf{g}_{2,0}^{(0)}\\
                       \end{array}
                     \right),
\end{equation*}
\begin{equation*}
\mathbf{g}_1^{(2)}=\left(
                       \begin{array}{c}
                         \mathbf{g}_{1,0}^{(2)} \\
                         \mathbf{g}_{1,1}^{(2)} \\
                       \end{array}
                     \right)=\left(
                       \begin{array}{l}
                         \mathbf{h}_{1,0}^{(2)}+\mathbf{g}_{2,0}^{(1)}+\mathbf{g}_{2,1}^{(1)} \\
                         \mathbf{h}_{1,1}^{(2)}+ \mathbf{g}_{2,0}^{(1)}\\
                       \end{array}
                     \right).
\end{equation*}

Finally, in conjunction  with the remaining data in the last row at nodes $2$ to $5$,
we now obtain
\begin{align*}
&(\mathbf{f}_2^{(2)},\ldots,\mathbf{f}_5^{(2)}, \mathbf{g}_0^{(2)}, \mathbf{g}_2^{(2)}),
\end{align*}
from which  $(\mathbf{f}_0^{(2)},\ldots,\mathbf{f}_5^{(2)})$ can be reconstructed. The original file is therefore  reconstructed according to the above analysis.

\textbf{Optimal rebuilding access for the target nodes:} W.L.O.G., consider repairing target node $0$, and we download   the following data
\begin{align*}
\mathbf{f}_{0}^{(0)}, \ldots, \mathbf{f}_{5}^{(0)}, \mathbf{h}_{1}^{(0)},\mathbf{h}_{2}^{(0)},
\end{align*}
i.e., the data in row 1 of Table \ref{Table stru new}.
Clearly, $\mathbf{g}_0^{(0)}$ can be obtained from $\mathbf{f}_{0}^{(0)}, \ldots, \mathbf{f}_{5}^{(0)}$. To compute
$$\mathbf{h}_{0}^{(1)}=\left(
                        \begin{array}{l}
                     \mathbf{g}_{0,0}^{(1)}+\mathbf{g}_{1,0}^{(0)}+\mathbf{g}_{1,1}^{(0)} \\
                          \mathbf{g}_{0,1}^{(1)}+\mathbf{g}_{1,0}^{(0)}  
                        \end{array}
                      \right)$$ that was
 stored at target node $0$, observe firstly that $\mathbf{g}_{1}^{(0)}=\left(
                                     \begin{array}{c}
                                        \mathbf{g}_{1,0}^{(0)}\\
                                       \mathbf{g}_{1,1}^{(0)} \\
                                     \end{array}
                                   \right)
$ can be computed with $\mathbf{f}_{0}^{(0)},\mathbf{f}_{1}^{(0)},\ldots,\mathbf{f}_{5}^{(0)}$,  then we can regenerate $\mathbf{g}_0^{(1)}=\left(
                                     \begin{array}{c}
                                        \mathbf{g}_{0,0}^{(1)}\\
                                       \mathbf{g}_{0,1}^{(1)} \\
                                     \end{array}
                                   \right)$  from the downloaded data $\mathbf{h}_{1}^{(0)}=\mathbf{g}_{1}^{(0)}+\mathbf{g}_0^{(1)}$, and obtain $\mathbf{h}_0^{(1)}$ subsequently. The other piece of coded data $\mathbf{h}_0^{(2)}$ stored at target node $0$ can be similarly regenerated. Thus  target node $0$ can indeed be optimally repaired  and has optimal rebuilding access  according to \eqref{Eqn-optRB}.

\subsection{A Key Pairing}

In this subsection,  we introduce a pairing technique of two column vectors and   analyze its properties, which will be crucial for the generic transformation.

Let $N$  denote an even constant from now on. For any two column vectors   $\mathbf{a}{[i]}$
and $\mathbf{b}{[i]}$ of length $N$,  we  divide them into two equal parts, which can be represented  as
\begin{eqnarray*}
 \mathbf{a}[i]=\left(
                 \begin{array}{c}
                   \mathbf{a}[i,0] \\
                  \mathbf{a}[i,1] \\
                 \end{array}
               \right),~~
   \mathbf{b}[i]=\left(
                 \begin{array}{c}
                   \mathbf{b}[i,0] \\
                  \mathbf{b}[i,1] \\
                 \end{array}
               \right).
\end{eqnarray*}
Then we define a linear operation $\boxplus$ between two column vectors $\mathbf{a}{[i]}$ and $\mathbf{b}{[i]}$  of length $N$  as
\begin{eqnarray}\label{Eqn oplus}
 \nonumber \mathbf{a}[i]\boxplus \mathbf{b}[i]
            & =&\mathbf{a} [i]+\left(
                 \begin{array}{cc}
                  \mathbf{I}_{N/2} & \mathbf{I}_{N/2} \\
                  \mathbf{I}_{N/2}&\mathbf{0}_{N/2}\\
                 \end{array}
               \right)\mathbf{b} [i]\\&=&\left(
                 \begin{array}{l}
 \mathbf{a}[i,0]+\mathbf{b}[i,0]+\mathbf{b}[i,1] \\
                  \mathbf{a}[i,1]+\mathbf{b}[i,0] \\
                 \end{array}
               \right),
\end{eqnarray}
which takes $\frac{3N}{2}$ XORs, where  $\mathbf{0}_{N/2}$  denotes the  zero matrix of order $N/2$.

For any two column vectors $\mathbf{a}$ and $\mathbf{b}$ of length $\delta N$, where $\delta\ge 1$, we divide them into $\delta$ segments,  i.e.,  rewrite $\mathbf{a}$ and $\mathbf{b}$ as
\begin{eqnarray}\label{Eqn divide a}
\mathbf{a}=\left(
    \begin{array}{c}
      \mathbf{a}{[0]} \\
      \mathbf{a}{[1]} \\
      \vdots \\
      \mathbf{a}{[\delta-1]} \\
    \end{array}
  \right),~~\mathbf{b}=\left(
    \begin{array}{c}
      \mathbf{b}{[0]} \\
      \mathbf{b}{[1]} \\
      \vdots \\
      \mathbf{b}{[\delta-1]} \\
    \end{array}
  \right),
\end{eqnarray}
where $\mathbf{a}{[i]}$ and $\mathbf{b}{[i]}$ are column vectors of length $N$ and are named  the $i$th \textit{segments} of $\mathbf{a}$ and $\mathbf{b}$ for $i\in [0,\delta)$, respectively.
Based on \eqref{Eqn oplus}, we further define a linear operation   $\boxplus_N$ between  two column vectors  $\mathbf{a}$ and $\mathbf{b}$ of length $\delta N$  as
\begin{eqnarray}\label{Eqn Oplus}
 \mathbf{a}\boxplus_N \mathbf{b}=\left(
    \begin{array}{c}
      \mathbf{a}{[0]}\boxplus \mathbf{b}{[0]} \\
      \mathbf{a}{[1]}\boxplus \mathbf{b}{[1]} \\
      \vdots \\
      \mathbf{a}{[\delta-1]}\boxplus \mathbf{b}{[\delta-1]} \\
    \end{array}
  \right),
\end{eqnarray}
i.e., performing  the linear operation $\boxplus$ defined  in \eqref{Eqn oplus} on the each segments of $\mathbf{a}$ and $\mathbf{b}$, which takes $\frac{3\delta N}{2}$ XORs.
Then,  the  following fact is obvious.
\begin{Fact}\label{Fac Pairing G}
For any two column vectors $\mathbf{a}$ and $\mathbf{b}$ of length $\delta N$ with $\delta \ge 1$, one can get
\begin{itemize}
  \item [(i)]  $\mathbf{a}[i]$ and $\mathbf{b}[i]$ from $\mathbf{a}[i]+\mathbf{b}[i]$ and $\mathbf{a}[i]\boxplus_N \mathbf{b}[i]$ by $2N$ XORs,
  \item [(ii)] $\mathbf{a}[i]$  from $\mathbf{b}[i]$  and $\mathbf{a}[i]\boxplus_N \mathbf{b}[i]$ (or $\mathbf{b}[i]\boxplus_N \mathbf{a}[i]$ or $\mathbf{a}[i]+ \mathbf{b}[i]$) by at most $\frac{3N}{2}$ XORs,
  \item [(iii)]  $S\cdot \mathbf{a}[i]$ and $S\cdot\mathbf{b}[i]$ from $S(\mathbf{a}[i]+\mathbf{b}[i])$ and $S(\mathbf{a}[i]\boxplus_N \mathbf{b}[i])$ by $4L$ XORs,
\end{itemize}
for all $i\in [0,\delta)$  and $2L\times N$ matrix
$S=\left(\begin{array}{cc}
S' & \\
& S'
\end{array}
\right)$ with $L$ being any positive integer.
\end{Fact}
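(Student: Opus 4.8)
The plan is to verify each of the three claims directly from the definition of $\boxplus$ in \eqref{Eqn oplus}, working segment by segment. Since $\boxplus_N$ acts independently on each length-$N$ segment via \eqref{Eqn Oplus}, it suffices to prove each item for a single segment pair $\mathbf{a}[i],\mathbf{b}[i]$ of length $N$; I will suppress the index $i$ and write $\mathbf{a}=(\mathbf{a}[0]^\top,\mathbf{a}[1]^\top)^\top$, $\mathbf{b}=(\mathbf{b}[0]^\top,\mathbf{b}[1]^\top)^\top$ for the two halves.

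For item (i), I would observe that we are given $\mathbf{a}+\mathbf{b}$ (whose two halves are $\mathbf{a}[0]+\mathbf{b}[0]$ and $\mathbf{a}[1]+\mathbf{b}[1]$) together with $\mathbf{a}\boxplus\mathbf{b}$, whose halves are $\mathbf{a}[0]+\mathbf{b}[0]+\mathbf{b}[1]$ and $\mathbf{a}[1]+\mathbf{b}[0]$. Adding the first half of $\mathbf{a}+\mathbf{b}$ to the first half of $\mathbf{a}\boxplus\mathbf{b}$ recovers $\mathbf{b}[1]$; adding the second half of $\mathbf{a}+\mathbf{b}$ to the second half of $\mathbf{a}\boxplus\mathbf{b}$ yields $\mathbf{a}[1]+\mathbf{a}[1]+\mathbf{b}[0]$… wait, more carefully: $(\mathbf{a}[1]+\mathbf{b}[1])+(\mathbf{a}[1]+\mathbf{b}[0])=\mathbf{b}[0]+\mathbf{b}[1]$, and combined with the already-recovered $\mathbf{b}[1]$ this gives $\mathbf{b}[0]$. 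Hence $\mathbf{b}$ is fully determined, and then $\mathbf{a}=(\mathbf{a}+\mathbf{b})+\mathbf{b}$ recovers $\mathbf{a}$. This is nothing but checking that the $2\times 2$ block matrix sending $(\mathbf{a},\mathbf{b})$ to $(\mathbf{a}+\mathbf{b},\mathbf{a}\boxplus\mathbf{b})$ is invertible over $\mathbb{F}_2$, which one can alternatively phrase as exhibiting the explicit inverse. For item (ii), knowing $\mathbf{b}$ makes each of the three listed quantities an invertible-affine function of $\mathbf{a}$ alone: from $\mathbf{a}\boxplus\mathbf{b}$ subtract the known $\mathbf{b}$-terms to read off $\mathbf{a}[0]$ and $\mathbf{a}[1]$ directly; from $\mathbf{b}\boxplus\mathbf{a}$ the halves are $\mathbf{b}[0]+\mathbf{a}[0]+\mathbf{a}[1]$ and $\mathbf{b}[1]+\mathbf{a}[0]$, so the second half gives $\mathbf{a}[0]$ and then the first gives $\mathbf{a}[1]$; from $\mathbf{a}+\mathbf{b}$ it is immediate. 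Item (iii) follows because $S$ commutes with the half-splitting: since $S=\mathrm{diag}(S',S')$ and each half of $\mathbf{a}\boxplus\mathbf{b}$ is an $\mathbb{F}_2$-linear combination of the halves $\mathbf{a}[0],\mathbf{a}[1],\mathbf{b}[0],\mathbf{b}[1]$, applying $S'$ to each half commutes with forming those combinations, so $S(\mathbf{a}\boxplus\mathbf{b})=(S'\mathbf{a}[0]+S'\mathbf{b}[0]+S'\mathbf{b}[1],\,S'\mathbf{a}[1]+S'\mathbf{b}[0])^\top=(S'\mathbf{a})\boxplus(S'\mathbf{b})$; thus the recovery procedure of item (i) applies verbatim to the vectors $S'\mathbf{a}$ and $S'\mathbf{b}$ in place of $\mathbf{a},\mathbf{b}$, yielding $S'\mathbf{a}[j]=(S\mathbf{a})[j]$ and $S'\mathbf{b}[j]$ from $S(\mathbf{a}+\mathbf{b})$ and $S(\mathbf{a}\boxplus\mathbf{b})$.

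There is no real obstacle here — the statement is flagged as ``obvious'' — so the only thing to be careful about is bookkeeping: keeping the two halves straight, making sure every addition is mod $2$ so that $\mathbf{x}+\mathbf{x}=\mathbf{0}$ is used correctly when cancelling, and noting that all the ``subtractions'' above are just further XORs. I would present the proof as: (a) the commutation identity $S(\mathbf{a}\boxplus\mathbf{b})=(S'\mathbf{a})\boxplus(S'\mathbf{b})$, which reduces (iii) to (i); (b) the explicit inversion formula for (i); (c) the three one-line reductions for (ii). The whole thing is a few lines of $\mathbb{F}_2$ linear algebra.

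\begin{proof}
By \eqref{Eqn Oplus} the operation $\boxplus_N$ acts independently on each length-$N$ segment, so it suffices to prove the three items for one segment; we fix $i$ and write $\mathbf{a}=(\mathbf{a}_0^\top,\mathbf{a}_1^\top)^\top$, $\mathbf{b}=(\mathbf{b}_0^\top,\mathbf{b}_1^\top)^\top$ for the two halves of $\mathbf{a}[i],\mathbf{b}[i]$, so that by \eqref{Eqn oplus}
\begin{eqnarray*}
\mathbf{a}\boxplus\mathbf{b}=\left(\begin{array}{c}\mathbf{a}_0+\mathbf{b}_0+\mathbf{b}_1\\ \mathbf{a}_1+\mathbf{b}_0\end{array}\right),\quad \mathbf{b}\boxplus\mathbf{a}=\left(\begin{array}{c}\mathbf{b}_0+\mathbf{a}_0+\mathbf{a}_1\\ \mathbf{b}_1+\mathbf{a}_0\end{array}\right).
\end{eqnarray*}

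\textbf{(i)} Given $\mathbf{a}+\mathbf{b}=(\mathbf{a}_0+\mathbf{b}_0,\ \mathbf{a}_1+\mathbf{b}_1)^\top$ and $\mathbf{a}\boxplus\mathbf{b}$, add the top halves to obtain $(\mathbf{a}_0+\mathbf{b}_0)+(\mathbf{a}_0+\mathbf{b}_0+\mathbf{b}_1)=\mathbf{b}_1$, and add the bottom halves to obtain $(\mathbf{a}_1+\mathbf{b}_1)+(\mathbf{a}_1+\mathbf{b}_0)=\mathbf{b}_0+\mathbf{b}_1$; together these give $\mathbf{b}_0$, hence $\mathbf{b}[i]$, and then $\mathbf{a}[i]=(\mathbf{a}[i]+\mathbf{b}[i])+\mathbf{b}[i]$.

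\textbf{(ii)} Suppose $\mathbf{b}[i]$ is known. From $\mathbf{a}[i]+\mathbf{b}[i]$ we get $\mathbf{a}[i]$ at once. From $\mathbf{a}\boxplus\mathbf{b}$ we get $\mathbf{a}_1=(\mathbf{a}_1+\mathbf{b}_0)+\mathbf{b}_0$ and $\mathbf{a}_0=(\mathbf{a}_0+\mathbf{b}_0+\mathbf{b}_1)+\mathbf{b}_0+\mathbf{b}_1$, hence $\mathbf{a}[i]$. From $\mathbf{b}\boxplus\mathbf{a}$ we get $\mathbf{a}_0=(\mathbf{b}_1+\mathbf{a}_0)+\mathbf{b}_1$ and then $\mathbf{a}_1=(\mathbf{b}_0+\mathbf{a}_0+\mathbf{a}_1)+\mathbf{b}_0+\mathbf{a}_0$, hence $\mathbf{a}[i]$.

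\textbf{(iii)} Since $S=\mathrm{diag}(S',S')$ and each half of $\mathbf{a}\boxplus\mathbf{b}$ is an $\mathbb{F}_2$-linear combination of $\mathbf{a}_0,\mathbf{a}_1,\mathbf{b}_0,\mathbf{b}_1$, left-multiplication by $S'$ commutes with forming these combinations, so
\begin{eqnarray*}
S(\mathbf{a}[i]\boxplus\mathbf{b}[i])=\left(\begin{array}{c}S'\mathbf{a}_0+S'\mathbf{b}_0+S'\mathbf{b}_1\\ S'\mathbf{a}_1+S'\mathbf{b}_0\end{array}\right)=(S'\mathbf{a}[i])\boxplus(S'\mathbf{b}[i]),
\end{eqnarray*}
and likewise $S(\mathbf{a}[i]+\mathbf{b}[i])=(S'\mathbf{a}[i])+(S'\mathbf{b}[i])$. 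Applying item (i) to the vectors $S'\mathbf{a}[i]$ and $S'\mathbf{b}[i]$ in place of $\mathbf{a}[i]$ and $\mathbf{b}[i]$ recovers $S'\mathbf{a}[i]=S\cdot\mathbf{a}[i]$ and $S'\mathbf{b}[i]=S\cdot\mathbf{b}[i]$ from $S(\mathbf{a}[i]+\mathbf{b}[i])$ and $S(\mathbf{a}[i]\boxplus\mathbf{b}[i])$.

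Since all arguments hold for every $i\in[0,t)$, the proof is complete.
\end{proof}
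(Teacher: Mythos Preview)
Your argument is correct and is precisely the direct $\mathbb{F}_2$-linear-algebra verification the paper has in mind; the paper itself offers no proof at all, simply declaring the fact ``obvious'' and moving on, so there is nothing further to compare. One small notational slip in part (iii): you write $S'\mathbf{a}[i]$ and $(S'\mathbf{a}[i])\boxplus(S'\mathbf{b}[i])$, but $S'$ has size $L\times N/2$ while $\mathbf{a}[i]$ has length $N$, so these expressions are ill-typed as written---what you mean (and what your displayed formula actually shows) is that $S(\mathbf{a}[i]\boxplus\mathbf{b}[i])$ is the length-$2L$ vector with halves $S'\mathbf{a}_0+S'\mathbf{b}_0+S'\mathbf{b}_1$ and $S'\mathbf{a}_1+S'\mathbf{b}_0$, so the recovery steps of (i) applied half-by-half yield $S'\mathbf{b}_1$, $S'\mathbf{b}_0$, and then $S'\mathbf{a}_0,S'\mathbf{a}_1$, i.e.\ $S\mathbf{a}[i]$ and $S\mathbf{b}[i]$; replacing ``$S'\mathbf{a}[i]$'' by ``$S\mathbf{a}[i]$'' (or spelling out the halves) fixes this.
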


\begin{Remark}\label{Remark-interprete}
The operation in \eqref{Eqn oplus} can be viewed as multiple basic linear operations between two binary column vectors of length $2$. This basic linear operation can    also be interpreted as  an operation  between two elements in $\mathbf{F}_{2^2}$ or two polynomials in $\mathbf{F}_{2}[x]$ with degree less than $2$. Such equivalent interpretations also hold for the operation in \eqref{Eqn Oplus}.
\end{Remark}

\subsection{A Generic Transformation}\label{subsec generic trans}
In this subsection, we propose the generic transformation, which employs a known $[n=k+r,k]$ binary  MDS array code $\mathcal{C}_{1}$  with  sub-packetization level $\alpha'=\delta N$ as the base code, where  $\delta \ge 1$ and  $r\ge 2$.
The transformation is then carried  out through the following three steps.

\subsection*{\textbf{Step 1: GENERATING $r$ instances of the base code $\mathcal{C}_{1}$ }}
Assume generating $r$ instances of the   code $\mathcal{C}_{1}$ to obtain an intermediate MDS array code $\mathcal{C}_2$ with sub-packetization level $\alpha=r\alpha'$, and 
denote by
$\mathbf{f}_{i}^{(l)}
$ and $\mathbf{g}_{j}^{(l)}
$  the data stored at remainder node $i$ and target node $j$ of the $l$-th instance of  $\mathcal{C}_{1}$, respectively,   where $i\in [0,k)$ and $l,j\in [0,r)$.

\subsection*{\textbf{Step 2: PERMUTING the data in the target nodes of $\mathcal{C}_2$}}

Keeping the  remainder
nodes of $\mathcal{C}_2$ intact, we construct another intermediate MDS array code $\mathcal{C}_3$ by permuting the data in the target nodes of $\mathcal{C}_2$, which is illustrated as follows. Denote by  $\mathbf{h}_{j}$   the data stored at target node $j$ of code $\mathcal{C}_3$. Let us write $\mathbf{h}_{j}$ as \begin{equation*}
  \mathbf{h}_{j} = \left(
                         \begin{array}{c}
                           \mathbf{h}_{j}^{(0)} \\
                           \vdots \\
                           \mathbf{h}_{j}^{(r-1)} \\
                         \end{array}
                       \right), j\in [0,r)
\end{equation*}
for convenience,
where $\mathbf{h}_{j}^{(l)}$ is defined as
\begin{eqnarray}\label{Eqn perm h}
\mathbf{h}_{j}^{(l)}=\mathbf{g}_{\pi_l(j)}^{(l)}, ~~j,l\in [0,r),
\end{eqnarray}
and $\pi_0,\pi_1,\cdots,\pi_{r-1}$ are some $r$ specific permutations on $[0,r)$, more details of the  requirements of the permutations   will be given  in Theorem \ref{Thm repair sys}.

\begin{table*}[htbp]
\begin{center}
\caption{The new  code  $\mathcal{C}_4$}\label{Table C3}
\begin{tabular}{|c|c|c|c|c|c|c|c|}
\hline   RN 0 & $\cdots$  & RN $k-1$ & TN $0$ ($\mathbf{h'}_{0}$)& TN $1$ ($\mathbf{h'}_{1}$)& $\cdots$  &TN $r-1$ ($\mathbf{h'}_{r-1}$)\\
\hline\hline $\mathbf{f}_0^{(0)}$ &  $\cdots$ & $\mathbf{f}_{k-1}^{(0)}$ &$\mathbf{h}_0^{(0)}$ &  $\mathbf{h}_1^{(0)}+\mathbf{h}_0^{(1)}$ &
$\cdots$ & $\mathbf{h}_{r-1}^{(0)}+\mathbf{h}_0^{(r-1)}$ \\
\hline  $\mathbf{f}_0^{(1)}$ &  $\cdots$  &  $\mathbf{f}_{k-1}^{(1)}$& $\mathbf{h}_0^{(1)}\boxplus_N\mathbf{h}_1^{(0)}$&$\mathbf{h}_1^{(1)}$&  $\cdots$  & $\mathbf{h}_{r-1}^{(1)}+\mathbf{h}_1^{(r-1)}$ \\
\hline  $\vdots$&$\ddots$ & $\vdots$&$\vdots$&$\vdots$&$\ddots$ &$\vdots$\\
\hline
  $\mathbf{f}_0^{(r-1)}$ & $\cdots$ & $\mathbf{f}_{k-1}^{(r-1)}$ & $\mathbf{h}_0^{(r-1)}\boxplus_N\mathbf{h}_{r-1}^{(0)}$& $\mathbf{h}_1^{(r-1)}\boxplus_N\mathbf{h}_{r-1}^{(1)}$ & $\cdots$ & $\mathbf{h}_{r-1}^{(r-1)}$ \\
\hline
\end{tabular}
\end{center}
\end{table*}

\subsection*{\textbf{Step 3: PAIRING the data in the target nodes of $\mathcal{C}_3$}}

By modifying only the data at the target nodes of the code $\mathcal{C}_3$ while keeping its remainder nodes   intact,
we construct the desired storage code $\mathcal{C}_4$ as follows. Denote by $\mathbf{h'}_{j}$ the data stored at target node $j$  of code $\mathcal{C}_4$. Let us write $\mathbf{h'}_{j}$ as
\begin{equation*}
  \mathbf{h'}_{j} =\left(
                         \begin{array}{c}
                           \mathbf{h'}_{j}^{(0)} \\
                           \vdots\\
                           \mathbf{h'}_{j}^{(r-1)} \\
                         \end{array}
                       \right)
\end{equation*}
for convenience,
where $\mathbf{h'}_{j}^{(l)}$,  $j,l\in[0,r)$,
are
defined by
\begin{eqnarray}\label{Eqn new con}
\mathbf{h'}_{j}^{(l)}=\left\{ \begin{array}{ll}
\mathbf{h}_{j}^{(j)}, & \textrm{if}~j=l, \\
\mathbf{h}_{j}^{(l)}+\mathbf{h}_{l}^{(j)},& \textrm{if}~j>l,\\
\mathbf{h}_{j}^{(l)}\boxplus_N \mathbf{h}_{l}^{(j)}
, &  \textrm{if}~j<l,
\end{array}\right.
\end{eqnarray}
and the linear operation  $\boxplus_N$ is defined  in \eqref{Eqn Oplus}. It is easy to see that this step takes a total of $\frac{5r(r-1)}{4}\alpha'$ XORs.
The  new code $\mathcal{C}_4$ is  depicted in Table  \ref{Table C3}.


In the following, we show that the new code $\mathcal{C}_4$ maintains the MDS property of the base code. The proof is similar to that in \cite{transform-IT}, nevertheless, we include it for completeness.

\begin{Theorem}\label{Thm MDS C4}
Code $\mathcal{C}_4$ has the MDS property. More specifically, the reconstruction process of the code $\mathcal{C}_4$ requires invoking the reconstruction process of the base code $r$ times and at most $\frac{1}{2}(3rt-t^2-2t)\alpha'$ additional XORs, where $t$ denotes the number of target nodes that are connected during the reconstruction, for $0\le t\le \min\{r,k\}$.
\end{Theorem}
\begin{proof}
The  code $\mathcal{C}_4$ has the MDS property if the original file (or equivalently, $\mathbf{f}_{i}^{(l)}$, $i\in [0,k)$ and $l\in [0,r)$) can be reconstructed by connecting to any $k$ out of the $n$ nodes.
The reconstruction is discussed in the following two cases.

(i) When connecting  to   $k$ remainder nodes: the original file can be reconstructed according to the MDS property of the base code.

(ii) Suppose $k-t$ remainder nodes and $t$ target nodes are connected, where $1\le t\le \min\{r,k\}$:   let $I=\{i_0,i_1,\cdots,i_{t-1}\}$  denote  the set of the indices of  the $t$ remainder nodes that are not connected and $J=\{j_0,j_1,\cdots,j_{t-1}\}$ be the set of the indices of the connected target nodes, where $0\le i_0<\cdots<i_{t-1}<k$, $0\le j_0<\cdots<j_{t-1}<r$. Denote  $\{j_t,\cdots,j_{r-1}\}=[0,r)\backslash J$. For example, in the example in Section \ref{ex_step2}, $t=2$, $I=\{0,1\}$, $J=\{0,1\}$, and $j_2=2$.

Firstly,    from the $t$ connected target nodes, the data in Table \ref{Table Thm MDS 1} is available, from which in conjunction with Fact \ref{Fac Pairing G}-(i),  we can get   $\mathbf{h}_{u}^{(l)}$ ($l,u\in J$) by $t(t-1)\alpha'$  XORs.

\begin{table}[htbp]
\begin{center}
\caption{Data from the $t$ connected target nodes
}\label{Table Thm MDS 1}
\begin{tabular}{|c|c|c|c|c|c|c|c|c|}
\hline   TN $j_0$ & TN $j_1$ & $\cdots$  &TN $j_{t-1}$ \\
\hline
\hline $\mathbf{h}_{j_0}^{(j_0)}$& $\mathbf{h}_{j_1}^{(j_0)}+\mathbf{h}_{j_0}^{(j_1)}$ &$\cdots$ & $\mathbf{h}_{j_{t-1}}^{(j_0)}+\mathbf{h}_{j_0}^{(j_{t-1})}$\\
\hline $\mathbf{h}_{j_0}^{(j_1)}\boxplus_N \mathbf{h}_{j_1}^{(j_0)}$& $\mathbf{h}_{j_1}^{(j_1)}$& $\cdots$ & $\mathbf{h}_{j_{t-1}}^{(j_1)}+\mathbf{h}_{j_1}^{(j_{t-1})}$ \\
\hline  $\vdots$ & $\vdots$ &$\ddots$ &$\vdots$ \\
\hline
$\mathbf{h}_{j_0}^{(j_{t-1})}\boxplus_N \mathbf{h}_{j_{t-1}}^{(j_0)}$  &  $\mathbf{h}_{j_1}^{(j_{t-1})}\boxplus_N \mathbf{h}_{j_{t-1}}^{(j_1)}$
    & $\cdots$  &
 $\mathbf{h}_{j_{t-1}}^{(j_{t-1})}$ \\
\hline
\end{tabular}
\end{center}
\end{table}

Secondly, for each $l\in J$, by $\mathbf{h}_{u}^{(l)}$ ($u\in J$) (at the target nodes) and $\mathbf{f}_{i}^{(l)}$ ($i\in [0, k)\backslash I$) (at the $k-t$ remainder nodes of code $\mathcal{C}_4$  connected), $\mathbf{h}_{u}^{(l)}$, $u\in [0,r)\backslash J$ can be reconstructed by \eqref{Eqn perm h} and the fact that $\mathcal{C}_{1}$ is MDS.

Thirdly, for the data in Table \ref{Table Thm MDS 3},  by Fact \ref{Fac Pairing G}-(ii), one can obtain $\mathbf{h}_{u}^{(l)}$ ($l\in [0,r)\backslash J$, $u\in J$) after cancelling $\mathbf{h}_{l}^{(u)}$ ($u\in J$, $l\in [0,r)\backslash J$) (marked with dash underline) by at most $\frac{3}{2}t(r-t)\alpha'$  XORs. Now, $\mathbf{h}_{u}^{(l)}$, i.e., $\mathbf{g}_{\pi_l(u)}^{(l)}$ is available for all $l\in [0,r)$ and $u\in J$.

\begin{table}[htbp]
\begin{center}
\caption{A part of data at the  target nodes that are connected, \protect\\ where $\mathbf{h}_{j_a}^{(j_b)} ~\diamond~ {\mathbf{h}_{j_a}^{(j_b)}}$ denotes $\mathbf{h}_{j_a}^{(j_b)} \boxplus_N {\mathbf{h}_{j_b}^{(j_a)}}$ if $j_a<j_b$ and $\mathbf{h}_{j_a}^{(j_b)} + {\mathbf{h}_{j_b}^{(j_a)}}$ if $j_a>j_b$}\label{Table Thm MDS 3}
\setlength{\tabcolsep}{4.8pt}
\begin{tabular}{|c|c|c|c|}
\hline   TN $j_0$ & TN $j_1$ & $\cdots$  &TN $j_{t-1}$ \\
\hline
\hline $\mathbf{h}_{j_0}^{(j_t)} ~\diamond~ \dashuline{\mathbf{h}_{j_t}^{(j_0)}}$& $\mathbf{h}_{j_1}^{(j_t)}~\diamond~\dashuline{\mathbf{h}_{j_t}^{(j_1)}}$ & $\cdots$ & $\mathbf{h}_{j_{t-1}}^{(j_t)}~\diamond~\dashuline{\mathbf{h}_{j_t}^{(j_{t-1})}}$\\
\hline $\vdots$  & $\vdots$ & $\ddots$  & $\vdots$ \\
\hline
 $\mathbf{h}_{j_0}^{(j_{r-1})}~\diamond~\dashuline{\mathbf{h}_{j_{r-1}}^{(j_0)}}$  & $\mathbf{h}_{j_1}^{(j_{r-1})}~\diamond~\dashuline{\mathbf{h}_{j_{r-1}}^{(j_1)}}$  &
  $\cdots$ & $\mathbf{h}_{j_{t-1}}^{(j_{r-1})}~\diamond~\dashuline{\mathbf{h}_{j_{r-1}}^{(j_{t-1})}}$ \\
\hline
\end{tabular}
\end{center}
\end{table}

Finally, together with $\mathbf{f}_{i}^{(l)}$, $i\in [0, k)\backslash I$,  for each $l\in [0,r)$, the remaining data $\mathbf{f}_{i_0}^{(l)},\cdots,\mathbf{f}_{i_{t-1}}^{(l)}$ can be reconstructed according to the MDS property of  $\mathcal{C}_{1}$.

From the above analysis, we see that reconstructing the original file from $k-t$ remainder nodes and $t$ target nodes requires invoking the reconstruction process of the base code $r$ times and at most $\frac{\alpha'}{2}(3rt-t^2-2t)$ additional XORs, where $0\le t\le \min\{r,k\}$.

\end{proof}

Next, we focus on the repair of the target nodes of  code $\mathcal{C}_4$.

\begin{Theorem}\label{Thm repair parity}
The target nodes of code $\mathcal{C}_4$ have optimal rebuilding access. More specifically, the repair of a target node requires invoking a reconstruction (or encoding) process of the base code and an extra of $\frac{5}{2}(r-1)\alpha'$ XORs.
\end{Theorem}
\begin{proof}
For any given $j\in [0,r)$, in the following, we prove that target node $j$ can be repaired by accessing and
downloading $\mathbf{h'}^{(j)}_{l}$, $l\in [0,r)\backslash\{j\}$, and $\mathbf{f}^{(j)}_{i}$, $i\in [0,k)$.

Firstly, one can compute $\mathbf{g}^{(j)}_{s}$, $s\in [0,r)$ from $\mathbf{f}^{(j)}_{i}$, $i\in [0,k)$ according to the MDS property of the base code, and then obtain $\mathbf{h}^{(j)}_{s}$, $s\in [0,r)$ by \eqref{Eqn perm h}.
Secondly, for any $0\le l\ne j<r$, according to the downloaded data $\mathbf{h'}^{(j)}_{l}=\mathbf{h}^{(j)}_{l}\boxplus_N \mathbf{h}^{(l)}_{j}$ (if $l<j$) or $\mathbf{h'}^{(j)}_{l}=\mathbf{h}^{(j)}_{l}+ \mathbf{h}^{(l)}_{j}$ (if $l>j$) and Fact \ref{Fac Pairing G}-(ii),
$\mathbf{h}^{(l)}_{j}$ is available after cancelling $\mathbf{h}^{(j)}_{l}$ from $\mathbf{h'}^{(j)}_{l}$,  and thus one can get $\mathbf{h'}^{(l)}_{j}=\mathbf{h}^{(l)}_{j}+ \mathbf{h}^{(j)}_{l}$
(if $l<j$) or $\mathbf{h'}^{(l)}_{j}=\mathbf{h}^{(l)}_{j}\boxplus_N \mathbf{h}^{(j)}_{l}$ (if $l>j$). Now target node $j$ is regenerated by noting that  $\mathbf{h'}^{(j)}_{j}=\mathbf{h}^{(j)}_{j}$. The repair of target node $j$ requires invoking a reconstruction (or encoding) process of the base code and an additional $\frac{5}{2}(r-1)\alpha'$ XORs according to Fact \ref{Fac Pairing G}.
\end{proof}
Finally,  we analyze the  repair of the remainder nodes of code $\mathcal{C}_4$.
\begin{Theorem}\label{Thm repair sys}
For each $i\in [0,k)$, remainder node $i$ of the $[n,k]$ MDS array code $\mathcal{C}_4$  maintains the same  normalized repair
bandwidth and rebuilding access as those of  the base code if the repair strategy for remainder node $i$ of the base code is naive, or
during the repair process of node $i$ of the base code, either the following condition (i) or (ii) holds:
\begin{itemize}
    \item [(i)] R1 holds for $j\in [0, r)$, and either R2 or R3 holds;
    \item [(ii)]  $r=2$, R1 holds for $j=0$ or $j=1$, and R2 holds,
\end{itemize}
where R1-R3 are defined as follows.
\begin{itemize}
  \item [R1.] The repair matrix $S_{i,k+j}$ is a block diagonal matrix of the form
  \begin{equation}\label{Eqn_S_matrix}
\hspace{-7mm} blkdiag\begin{pmatrix}S_{i,k+j,0} \hspace{-1mm}& S_{i,k+j,0} \hspace{-1mm}& \cdots \hspace{-1mm}&  \hspace{-1mm}S_{i,k+j,\delta-1} \hspace{-1mm}& S_{i,k+j,\delta-1}\end{pmatrix} 
    \end{equation}
    where $S_{i,k+j,m}$ is an $\frac{N}{2}\times \frac{N}{2}$ matrix (can also be a zero matrix) and is repeated twice in the above matrix for every $m\in [0, \delta)$;
  \item [R2.]
 $\pi_l(j)=\pi_j(l)$ for $j,l\in [0,r)$;
   \item [R3.] $S_{i,k,m}=S_{i,k+j,m}$ for all $j\in[1,r)$ and $m\in [0, \delta)$.
\end{itemize}
In addition, repairing remainder node $i$ of the code $\mathcal{C}_4$ requires invoking the repair strategy of the base code $r$ times, and an extra of $(r-1)\alpha'$ XORs if the repair strategy for remainder node $i$ of the base code is non-naive.
\end{Theorem}

\begin{proof}
If the  repair strategy for  remainder node $i$ of the base code is naive, then the same holds for the new
code $\mathcal{C}_4$. Herein we only now focus on the non-naive case.

Suppose condition (i) holds, then for any $i\in[0,k)$,~$l\in[0,r)$, according to the repair strategy of the base code and R1, we have that  $\mathbf{f}_i^{(l)}$ can be repaired by downloading  $ S_{i,s}\mathbf{f}_s^{(l)}$, $s\in [0,k)\backslash\{i\}$, and 
\begin{equation*}
\left(
                          \begin{array}{cc}
                            S_{i,k+j,m} &  \\
                             & S_{i,k+j,m} \\
                          \end{array}
                        \right)\mathbf{g}_{j}^{(l)}[m], j\in [0,r)     
\end{equation*} from the  surviving nodes, where $m\in[0,\delta)$.

If R2 or R3 holds,  then we have
\begin{equation}\label{Eqn constant S}
S_{i,k+\pi_l(j),m}=S_{i,k+\pi_j(l),m}
\end{equation}
for all $l,j\in [0,r)$ with $l\ne j$ and $m\in [0,\delta)$.
Then, the repair process for remainder node $i$ of code $\mathcal{C}_4$ can be proceeded as follows:
\begin{itemize}
\item [(a)] Download
  $S_{i,s}
\mathbf{f}_{s}^{(l)}$ and 
\begin{equation*}
\left(
                          \begin{array}{cc}
                            S_{i,k+\pi_l(j),m} &  \\
                             & S_{i,k+\pi_l(j),m} \\
                          \end{array}
                        \right)\mathbf{h'}_{j}^{(l)}[m]    
\end{equation*} for all $s\in [0,k)\backslash\{i\}$,  $j,l\in [0,r)$, and $m\in[0,\delta)$,
\item [(b)]For all $l,j\in [0,r)$ with $l\ne j$, by  Fact \ref{Fac Pairing G}, \eqref{Eqn constant S}, and  from
\begin{equation*}
\left(
                          \begin{array}{cc}
                            S_{i,k+\pi_l(j),m} &  \\
                             & S_{i,k+\pi_l(j),m} \\
                          \end{array}
                        \right)\mathbf{h'}_{j}^{(l)}[m]
\end{equation*}
and
\begin{equation*}
\left(
                          \begin{array}{cc}
                            S_{i,k+\pi_j(l),m} &  \\
                             & S_{i,k+\pi_j(l),m} \\
                          \end{array}
                        \right)\mathbf{h'}_{l}^{(j)}[m],
\end{equation*}
we can get
      \begin{equation*}
\left(
                          \begin{array}{cc}
                            S_{i,k+\pi_l(j),m} &  \\
                             & S_{i,k+\pi_l(j),m} \\
                          \end{array}
                        \right)\mathbf{h}_{j}^{(l)}[m] 
\end{equation*}
and
      \begin{equation*}
 \left(
                          \begin{array}{cc}
                            S_{i,k+\pi_j(l),m} &  \\
                             & S_{i,k+\pi_j(l),m} \\
                          \end{array}
                        \right)\mathbf{h}_{l}^{(j)}[m]
 \end{equation*}
  by $2N$ XORs according to Fact \ref{Fac Pairing G}-(ii).
By \eqref{Eqn perm h}, and together with the data
\begin{align*}
&\left(
                          \begin{array}{cc}
                            S_{i,k+\pi_l(l),m} &  \\
                             & S_{i,k+\pi_l(l),m} \\
                          \end{array}
                        \right)\mathbf{h'}_{l}^{(l)}[m]\\=&\left(
                          \begin{array}{cc}
                            S_{i,k+\pi_l(l),m} &  \\
                             & S_{i,k+\pi_l(l),m} \\
                          \end{array}
                        \right)\mathbf{h}_{l}^{(l)}[m]
\end{align*}
obtained in the previous step, the data \begin{equation*}
\left(
                          \begin{array}{cc}
                            S_{i,k+j,m} &  \\
                             & S_{i,k+j,m} \\
                          \end{array}
                        \right)\mathbf{g}_{j}^{(l)}[m], m\in [0,\delta),   
\end{equation*} i.e., $S_{i,k+j}\mathbf{g}_{j}^{(l)}$, is available now for all $l,j\in [0,r)$ by a total of $r(r-1)N$ XORs.
  \item [(c)] From the  data obtained in the previous two steps, the repair procedure of the base MDS code can be invoked to regenerate $\mathbf{f}_i^{(l)}$ for $l\in [0,r)$.
\end{itemize}
Therefore, repairing remainder node $i$ requires invoking the repair strategy of the base code $r$ times and an extra of $(r-1)\alpha'$ XORs.

If condition (ii) holds, the proof proceeds in the same fashion, thus we omit it.
\end{proof}

Note that if condition (ii) of Theorem \ref{Thm repair sys} does not hold, then condition (i) has to be fulfilled to apply Theorem \ref{Thm repair sys}. In this case, for simplicity of verification, one could only check R1 when applying Theorem \ref{Thm repair sys}, as Requirement R2 of Theorem \ref{Thm repair sys} can  be easily  satisfied,
e.g.,  set
$\pi_l(j)=(l+j)\bmod r$, for $l,j\in [0,r)$. Nevertheless, this does not mean that R3 of Theorem \ref{Thm repair sys} is unnecessary. If R1 and R3 hold, then there is no restriction on the permutations in Step 2 as condition (i) of Theorem \ref{Thm repair sys} is now satisfied naturally. Therefore, the permutations can be arbitrary including the identity permutation, i.e., Step 2 of the generic transformation can be omitted, which can simplify the transformation. 

\begin{figure*}
\hrulefill
\begin{equation}\label{Eqn_g'}
\mathbf{v}_{t}^{(l)}=\left(\hspace{-1mm}
                        \begin{array}{c}
                          \mathbf{v}_{t}^{(l)}[0]\\
                          \mathbf{v}_{t}^{(l)}[1] \\
                          \vdots \\
                          \mathbf{v}_{t}^{(l)}[\delta-1] \\
                        \end{array}
                      \hspace{-1mm}\right),~~
\mathbf{v}_{t}^{(l)}[i]=\left\{\hspace{-1mm} \begin{array}{ll}
\mathbf{h}_{t}^{(l)}[i], & \textrm{if}~t=l, \\
\left(
  \begin{array}{l}
    \mathbf{h}_{t}^{(l)}[i,0]+\mathbf{h}_{t}^{(l)}[i,1]+\mathbf{h}_{l}^{(t)}[i,0]+\mathbf{h}_{l}^{(t)}[i,1] \\
    \mathbf{h}_{t}^{(l)}[i,0]+\mathbf{h}_{l}^{(t)}[i,0]\\
  \end{array}
\right)
,& \textrm{if}~t>l,\\
\left(
  \begin{array}{c}
    \mathbf{h}_{t}^{(l)}[i,0]+\mathbf{h}_{t}^{(l)}[i,1]+\mathbf{h}_{l}^{(t)}[i,1] \\
    \mathbf{h}_{l}^{(t)}[i,0]+\mathbf{h}_{l}^{(t)}[i,1]+\mathbf{h}_{t}^{(l)}[i,0] \\
  \end{array}
\right)
, &  \textrm{if}~t<l.
\end{array}\right.
\end{equation}
\end{figure*}

\begin{Remark}
Note that for a binary MDS array code with sub-packetization level $\alpha'$, there may be several pairs of $(\delta, N)$ such that $\alpha'=\delta N$ and $N$ is even. To apply the generic transformation, one should choose an appropriate pair of $(\delta, N)$ so that either condition (i) or condition (ii) of Theorem \ref{Thm repair sys} can be satisfied.
\end{Remark}

To this end, we give a concluding result of this subsection according to Theorems \ref{Thm MDS C4}-\ref{Thm repair sys}.

\begin{Theorem}\label{Thm_Sum}
For an $[n, k]$ binary MDS array code with sub-packetization level $\alpha'=\delta N$, by applying the generic transformation in Section \ref{subsec generic trans}, a new $[n, k]$ binary MDS array code with sub-packetization level $\alpha=r\alpha'$ can be obtained such that
\begin{itemize}
    \item Any arbitrarily chosen $n-k$ nodes have  optimal rebuilding access. 
    \item The remaining nodes can maintain the
  same  normalized repair
bandwidth and rebuilding access as those of  the base code if the repair strategy for the remainder nodes of the base code is naive, or if
either condition (i) or (ii) in Theorem \ref{Thm repair sys} holds.
\end{itemize}
\end{Theorem}

\subsection{An Alternative Pairing Technique for Step 3 -  Target Nodes Unchanged}\label{subsection:step2'}

Note that the code obtained by the generic transformation in Section \ref{subsec generic trans} is no longer of the systematic form if some $r$ systematic nodes of a systematic base code are chosen to be the target nodes.
Although it is a well-known fact that a linear code in non-systematic form can always be converted into a systematic one, such a conversion is not trivial/easy in general since normally one needs to solve a linear equation system. To this end, here we provide a substitute pairing operation that can maintain the systematic form of the code without performing further conversion, which can also avoid the complexity caused by the conversion.

W.L.O.G., the last $r$ nodes are chosen as the target nodes while the data modification is carried out at the first $r$ nodes. Note that $\mathbf{f}_{0}^{(l)}, \cdots, \mathbf{f}_{r-1}^{(l)}$ can be represented by $\mathbf{f}_{r}^{(l)}, \cdots, \mathbf{f}_{k-1}^{(l)}, \mathbf{g}_{0}^{(l)}, \cdots$, $\mathbf{g}_{r-1}^{(l)}$ for any $l\in [0,r)$ by the MDS property of the base code $\mathcal{C}_{1}$. That is,
$$\mathbf{f}_{j}^{(l)} \hspace{-.5mm}=\hspace{-.5mm} \sum\limits_{t=r}^{k-1}A_{j,t}\mathbf{f}_{t}^{(l)}+\hspace{-.5mm}\sum\limits_{t=0}^{r-1} A_{j,t}\mathbf{g}_{t}^{(l)}\hspace{-.5mm}=\hspace{-.5mm}\sum\limits_{t=r}^{k-1}A_{j,t}\mathbf{f}_{t}^{(l)}+\hspace{-.5mm}\sum\limits_{t=0}^{r-1} A_{j,\pi_l(t)}\mathbf{h}_{t}^{(l)}$$
where $A_{j,0},\cdots,A_{j,k-1}$ are some nonsingular matrices of order $N$. Then, define a new  storage code $\mathcal{C}_4'$  (see Table \ref{Table C3'})
as
\begin{equation}\label{Eqn f'}
  \mathbf{f'}_{j}^{(l)} = \sum\limits_{t=r}^{k-1}A_{j,t}\mathbf{f}_{t}^{(l)}+\sum\limits_{t=0}^{r-1} A_{j,\pi_l(t)}\mathbf{v}_{t}^{(l)},  ~j,~l\in [0,r),
\end{equation}
where $\mathbf{v}_{t}^{(l)}$ is defined in \eqref{Eqn_g'}.
\begin{table}[htbp]
\begin{center}
\caption{Structure of the new storage code  $\mathcal{C}_4'$}\label{Table C3'}
\setlength{\tabcolsep}{2.3pt}
\begin{tabular}{|c|c|c|c|c|c|c|c|c|}
\hline  RN 0  & $\cdots$  & RN   $r-1$   &RN   $r$ & $\cdots$  & RN   $k-1$ & TN   $0$ &$\cdots$  & TN   $r-1$ \\
\hline\hline   $\mathbf{f'}_0^{(0)}$ & $\cdots$ &$\mathbf{f'}_{r-1}^{(0)}$ &$\mathbf{f}_{r}^{(0)}$ & $\cdots$ & $\mathbf{f}_{k-1}^{(0)}$ & $\mathbf{h}_{0}^{(0)}$ & $\cdots$ & $\mathbf{h}_{r-1}^{(0)}$ \\
\hline   $\mathbf{f'}_0^{(1)}$ & $\cdots$  & $\mathbf{f'}_{r-1}^{(1)}$ &$\mathbf{f}_{r}^{(1)}$&$\cdots$  &$\mathbf{f}_{k-1}^{(1)}$&  $\mathbf{h}_{0}^{(1)}$& $\cdots$  &$\mathbf{h}_{r-1}^{(1)}$ \\
\hline   $\vdots$ & $\ddots$ & $\vdots$ & $\vdots$ &$\ddots$ &$\vdots$ &$\vdots$ & $\ddots$ & $\vdots$ \\
\hline
$\mathbf{f'}_0^{(r-1)}$ &   $\cdots$&   $\mathbf{f'}_{r-1}^{(r-1)}$ & $\mathbf{f}_{r}^{(r-1)}$  &  $\cdots$& $\mathbf{f}_{k-1}^{(r-1)}$ &  $\mathbf{h}_{0}^{(r-1)}$&  $\cdots$& $\mathbf{h}_{r-1}^{(r-1)}$ \\
\hline
\end{tabular}
\end{center}
\end{table}




\begin{figure*}
\hrulefill
\begin{equation}\label{Eqn_S_i,j,w}
S_{i,j,w}^{(t+1)}= \left\{
           \begin{array}{ll}
             \mathbf{I}_{\alpha'/2},  & \mathrm{if}~w_t\equiv i\,(\bmod\,r)  \mathrm{~and~}tr\le k-r, \mathrm{or~}w_{\lfloor\frac{k}{r}\rfloor}= i-(k-r) \mathrm{~and~}tr> k-r\\
             \mathbf{0}_{\alpha'/2},  & \mathrm{otherwise}
          \end{array}
         \right.
\end{equation}
\end{figure*}

Note from \eqref{Eqn_g'}  that for $t,l\in [0,r)$ with $t>l$, we have
\begin{equation*}
 \mathbf{v}_{t}^{(l)}[i]+ \mathbf{v}_{l}^{(t)}[i]=\mathbf{h}_{t}^{(l)}[i]\mbox{~~and~~} \mathbf{v}_{l}^{(t)}[i]\boxplus \mathbf{v}_{t}^{(l)}[i]=\mathbf{h}_{l}^{(t)}[i],
\end{equation*}
which together with \eqref{Eqn Oplus} we further have
 $\mathbf{v}_{t}^{(l)}+ \mathbf{v}_{l}^{(t)}=\mathbf{h}_{t}^{(l)}$ and  $\mathbf{v}_{l}^{(t)}\boxplus_N \mathbf{v}_{t}^{(l)}=\mathbf{h}_{l}^{(t)}$.
That is,  the new code $\mathcal{C}_4'$ can be obtained by applying Step 3 to the code $\mathcal{C}_3'$ (cf. Table \ref{Table C2 II}), where $$\mathbf{f'}_0^{(l)}, \cdots, \mathbf{f'}_{r-1}^{(l)}, \mathbf{f}_{r}^{(l)}, \cdots, \mathbf{f}_{k-1}^{(l)}, \mathbf{v}_{0}^{(l)}, \cdots, \mathbf{v}_{r-1}^{(l)}$$ is an instance of  $\mathcal{C}_1$ by \eqref{Eqn f'}.

\begin{table}[htbp]
\begin{center}
\caption{The storage  code  $\mathcal{C}_3'$}\label{Table C2 II}
\setlength{\tabcolsep}{2.4pt}
\begin{tabular}{|c|c|c|c|c|c|c|c|c|}
\hline RN 0  & $\cdots$  & RN   $r-1$   &RN   $r$ & $\cdots$  & RN   $k-1$ & TN   $0$ &$\cdots$  & TN   $r-1$ \\
\hline\hline $\mathbf{f'}_0^{(0)}$ & $\cdots$ &$\mathbf{f'}_{r-1}^{(0)}$ &$\mathbf{f}_{r}^{(0)}$ & $\cdots$ & $\mathbf{f}_{k-1}^{(0)}$ & $\mathbf{v}_{0}^{(0)}$ & $\cdots$ & $\mathbf{v}_{r-1}^{(0)}$ \\
\hline   $\mathbf{f'}_0^{(1)}$ & $\cdots$  & $\mathbf{f'}_{r-1}^{(1)}$ &$\mathbf{f}_{r}^{(1)}$&$\cdots$  &$\mathbf{f}_{k-1}^{(1)}$&  $\mathbf{v}_{0}^{(1)}$& $\cdots$  &$\mathbf{v}_{r-1}^{(1)}$ \\
\hline   $\vdots$ & $\ddots$ & $\vdots$ & $\vdots$ &$\ddots$ &$\vdots$ &$\vdots$ & $\ddots$ & $\vdots$ \\
\hline
  $\mathbf{f'}_0^{(r-1)}$ &
   $\cdots$&
   $\mathbf{f'}_{r-1}^{(r-1)}$ &
 $\mathbf{f}_{r}^{(r-1)}$  &
  $\cdots$&
 $\mathbf{f}_{k-1}^{(r-1)}$ &
  $\mathbf{v}_{0}^{(r-1)}$&
  $\cdots$&
 $\mathbf{v}_{r-1}^{(r-1)}$ \\
\hline
\end{tabular}
\end{center}
\end{table}

Similarly to \cite{transform-IT}, we  immediately have the following  result.

\begin{Theorem}
 The code $\mathcal{C}_4'$ has the MDS property and the same repair property as that of the code $\mathcal{C}_4$.
\end{Theorem}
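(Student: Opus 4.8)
The plan is to prove the claim about $\mathcal{C}_4'$ by reducing everything to the already-established properties of $\mathcal{C}_4$, exploiting the observation made just before the statement: the vectors $\mathbf{v}_t^{(l)}$ are obtained from the $\mathbf{h}_t^{(l)}$ by exactly the pairing operation of Step~3, only with the roles of the paired components arranged so that the modification falls on the first $r$ (remainder) nodes instead of the last $r$ (target) nodes. Concretely, I would first record, as the key structural lemma, that for all $t,l\in[0,r)$ one has $\mathbf{v}_t^{(l)}+\mathbf{v}_l^{(t)}=\mathbf{h}_t^{(l)}$ and $\mathbf{v}_l^{(t)}\boxplus_N\mathbf{v}_t^{(l)}=\mathbf{h}_l^{(t)}$ (for $t=l$ this is trivial, for $t\ne l$ it is the displayed identity in the excerpt), and hence, by Fact~\ref{Fac Pairing G}, that from any two of the three quantities $\{\mathbf{h}_t^{(l)},\mathbf{h}_l^{(t)}\}$ one recovers the pair $\{\mathbf{v}_t^{(l)},\mathbf{v}_l^{(t)}\}$ and vice versa, and that the same holds after multiplying by any block-diagonal matrix $S=\mathrm{diag}(S',S')$. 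This is the single fact that makes the two codes interchangeable.

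Next I would verify the MDS property of $\mathcal{C}_4'$. The cleanest route is to argue that $\mathcal{C}_4'$ is obtained from the code $\mathcal{C}_3'$ of Table~\ref{Table C2 II} by applying Step~3 of the generic transformation verbatim, where $\mathcal{C}_3'$ is itself an MDS code: indeed, for each $l\in[0,r)$ the tuple $(\mathbf{f'}_0^{(l)},\dots,\mathbf{f'}_{r-1}^{(l)},\mathbf{f}_r^{(l)},\dots,\mathbf{f}_{k-1}^{(l)},\mathbf{v}_0^{(l)},\dots,\mathbf{v}_{r-1}^{(l)})$ is, by \eqref{Eqn f'} and the invertibility of the matrices $A_{j,t}$, a bijective relabelling of an instance of the base code $\mathcal{C}_1$ (the $\mathbf{v}_t^{(l)}$ play the role the $\mathbf{g}_t^{(l)}$ — up to the permutation $\pi_l$ — played before, and the $\mathbf{f'}_j^{(l)}$ are an invertible linear image of the remainder data that still determines, and is determined by, the original systematic symbols). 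Since $\mathcal{C}_3'$ is therefore MDS and $\mathcal{C}_4'$ is its image under Step~3, Theorem~\ref{Thm MDS C4} applies directly to give the MDS property of $\mathcal{C}_4'$.

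For the repair properties I would treat the target nodes and the remainder nodes separately. The target nodes $\mathbf{h}_j$ in $\mathcal{C}_4'$ are literally unchanged from $\mathcal{C}_3$/$\mathcal{C}_2$; to repair target node $j$, a surviving node downloads $\mathbf{f}^{(j)}$-components (from which $\mathbf{g}_s^{(j)}$, hence $\mathbf{h}_s^{(j)}$, is computed) from the untouched remainder nodes $r,\dots,k-1$, and from each of the modified remainder nodes $0,\dots,r-1$ it downloads the appropriate half $\mathbf{v}_t^{(j)}[\cdot]$-component; using the key lemma (Fact~\ref{Fac Pairing G}-(ii)) it peels off $\mathbf{h}_j^{(j)}$ and recovers the cross terms $\mathbf{h}_l^{(j)}$, which is exactly the bandwidth/access pattern of Theorem~\ref{Thm repair parity} transported to the new node layout, so the count is identical. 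For the remainder nodes: nodes $r,\dots,k-1$ are identical to those of $\mathcal{C}_4$ and are handled by Theorem~\ref{Thm repair sys} unchanged; for nodes $0,\dots,r-1$, whose data $\mathbf{f'}_j^{(l)}$ is an invertible linear combination (by $A_{j,\cdot}$) of base-code data, I would observe that repairing $\mathbf{f'}_j$ is equivalent to repairing the corresponding base-code symbol, so the same repair matrices and the same requirements R1–R2 from Theorem~\ref{Thm repair sys} deliver the same normalized repair bandwidth and access, with the pairing/unpairing of the $\mathbf{v}$'s absorbed exactly as in step~(b) of that proof via the key lemma. The main obstacle I anticipate is bookkeeping rather than conceptual: one must be careful that the block-diagonal form \eqref{Eqn_S_matrix} of the repair matrices is compatible with the half-vector splitting used in \eqref{Eqn_g'}, i.e.\ that multiplying a $\mathbf{v}_t^{(l)}[i]$ by $\mathrm{diag}(S_{i,j,m},S_{i,j,m})$ commutes with the $\boxplus$ operation — which is precisely Fact~\ref{Fac Pairing G}-(iii) — and that the permutations $\pi_l$ line up on both the $\mathbf{f'}$ side and the $\mathbf{v}$ side; once this compatibility is spelled out, the theorem follows by the stated analogy with \cite{transform-IT}.
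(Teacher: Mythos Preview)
Your approach is correct and is precisely the argument the paper intends: the paper itself gives no detailed proof, but immediately before the theorem it establishes that $\mathcal{C}_4'$ is exactly the Step-3 (pairing) image of the code $\mathcal{C}_3'$ of Table~\ref{Table C2 II}, where each row of $\mathcal{C}_3'$ is an instance of the base code $\mathcal{C}_1$, so Theorems~\ref{Thm MDS C4}--\ref{Thm repair sys} transfer verbatim --- which is exactly your reduction. One small slip to fix: in your target-node repair paragraph, the modified remainder nodes $0,\dots,r-1$ store $\mathbf{f'}_i^{(l)}$, not ``$\mathbf{v}_t^{(j)}$-components,'' so what is downloaded from them is the full block $\mathbf{f'}_i^{(j)}$ (from which, together with $\mathbf{f}_r^{(j)},\dots,\mathbf{f}_{k-1}^{(j)}$, all $\mathbf{v}_s^{(j)}$ are computed); the rest of your argument then goes through unchanged.
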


\section{Applications and comparisons}


In this section, based on the transformation in Section \ref{Sec tran}, 
we propose two specific  applications.  The first application is  recursively applying the transformation multiple times to an arbitrary binary MDS array code to obtain a new one with optimal rebuilding access for all nodes. The second application is applying the transformation one time to an arbitrary binary MDS array codes with optimal repair bandwidth/optimal rebuilding access for all systematic nodes to obtain a new one with the corresponding optimality property for all nodes. Finally, we provide extensive comparisons between the transformation proposed in this paper and the one in \cite{HouLee}.

\subsection{Application From an Arbitrary Binary MDS Array Code}

Choosing any $[n,k]$ binary    MDS array code with even sub-packetization level $\alpha'$ as the base code $\mathcal{Q}_{0}$, for example, one can choose the EVENODD code  \cite{EVENODD} or its generalizations in \cite{Blaum_Bruck_Vardy96,EVENODDG}, or the STAR code \cite{STAR} as the base code. We can build a binary MDS array code with optimal rebuilding access for all nodes through   Algorithm \ref{alg1}.
\begin{algorithm}[htbp]
  \caption{} \label{alg1}
  \begin{algorithmic}[1]
\State Given an $[n,k]$ binary  MDS array code $\mathcal{Q}_{0}$ with even sub-packetization level $\alpha'$, let $m=\lceil n/r\rceil$ where $r=n-k$; 
  	\For{$i=0$; $i<m$; $i++$}
       \State Set the  code $\mathcal{Q}_{i}$ as the base code;
       \If{$i<m-1$}
       \State Designate nodes $\min\{ir, k-r\}, \min\{ir, k-r\}+1, \cdots, \min\{ir, k-r\}+r-1$ as the target nodes;
       \Else
       \State Designate nodes $k, k+1, \cdots, n-1$ as the target nodes;
       \EndIf
       \State Let $N=\alpha'$ and $\delta=r^i$, applying the generic transformation to the base code $\mathcal{Q}_{i}$ to get a new MDS array code $\mathcal{Q}_{i+1}$ with sub-packetization level $r^{i+1}\alpha'$;
        \EndFor
  \end{algorithmic}  	
\end{algorithm}

According to Theorems \ref{Thm MDS C4}-\ref{Thm repair sys}, Algorithm \ref{alg1} eventually gives a binary MDS array code $\mathcal{Q}_{m}$ with optimal rebuilding access, where the sub-packetization level is $r^{\lceil n/r\rceil}\alpha'$. Specifically, applying Algorithm \ref{alg1} to the $[n=k+2, k]$ EVENODD code with sub-packetization level $\alpha'=k-1$ in \cite{EVENODD}, where $k$ is an odd prime, we can obtain an $[n=k+2, k]$ MDS array code with optimal rebuilding access for all nodes, where the sub-packetization level is $(k-1)2^{\frac{n+1}{2}}$, which is $k-1$ times the lower bound of the minimal required sub-packetization level derived in \cite{tight-bound-on-alpha}. However, the lower bound in \cite{tight-bound-on-alpha} is not sensitive to the underlying finite field.

Strictly speaking, to recursively apply Theorem \ref{Thm repair sys}, one needs to verify that requirement R1 of Theorem \ref{Thm repair sys}  is satisfied during each round of the transformation in Algorithm \ref{alg1}. Indeed, it can be checked by induction as follows.

\noindent\textbf{Induction hypothesis}:
For $t\in[1, m-1)$, the first $tr$ nodes  of the code $\mathcal{Q}_{t}$ can be optimally repaired and the $r^t\alpha'\times r^t\alpha'$ repair matrix $S_{i,j}^{(t)}$ of  node $i$ ($i\in [0, tr)$)  is of the form
\begin{equation}\label{Eqn_Alg1_repair_1}
 S_{i,j}^{(t)}=blkdiag\begin{pmatrix}\hspace{-1mm}
  S_{i,j,0}^{(t)} \hspace{-1mm}&
   S_{i,j,0}^{(t)}\hspace{-1mm}&
 \hspace{-1mm}  \cdots\hspace{-1mm}&
   S_{i,j,r^{t}-1}^{(t)}\hspace{-1mm}&
   S_{i,j,r^{t}-1}^{(t)}\hspace{-1mm}
 \end{pmatrix}
\end{equation}
and
\begin{eqnarray}\label{Eqn_Alg1_repair_2}
S_{i,j,w}^{(t)}= \left\{
           \begin{array}{ll}
             \mathbf{I}_{\alpha'/2},  & \mathrm{if}~w_l\equiv i\,(\bmod\,r)\\
             \mathbf{0}_{\alpha'/2},  & \mathrm{otherwise}
          \end{array}
         \right.
\end{eqnarray}
for all  $j\in [0,n)\backslash\{i\}$ and $w\in [0,r^{t})$ with $(w_{t-1},\cdots,w_1,w_0)$ denoting its $r$-ary expansion, where $l$ is a nonnegative  integer such that $lr\le i<(l+1)r$.

In the $t$-th round, the MDS array code $\mathcal{Q}_t$ is set as the base code, and nodes
\begin{equation*}
\min\{tr, k-r\}, \min\{tr, k-r\}+1, \cdots, \min\{tr, k-r\}+r-1
\end{equation*}
are designated as target nodes. Clearly,  the repair matrices of the first $\min\{tr, k-r\}$ nodes of the base code $\mathcal{Q}_t$ satisfy the requirement R1 of Theorem \ref{Thm repair sys} by  \eqref{Eqn_Alg1_repair_1} and \eqref{Eqn_Alg1_repair_2}.  Further,  the induction hypothesis is true for all the $r^{t+1}\alpha'\times r^{t+1}\alpha'$ repair matrices  $S_{i,j}^{(t+1)}$ of the first $\min\{(t+1)r, k\}$  nodes  of the MDS array code $\mathcal{Q}_{t+1}$ since
\begin{equation*}
S_{i,j}^{(t+1)}= blkdiag\begin{pmatrix}
S_{i,j}^{(t)}& \cdots & S_{i,j}^{(t)}
\end{pmatrix}
\end{equation*}
for  $i\in [0,\min\{tr, k-r\}$), $j\in [0,n)\backslash\{i\}$,
and
\begin{align*}
&S_{i,j}^{(t+1)}\\=& blkdiag\begin{pmatrix}
\hspace{-1mm}S_{i,j,0}^{(t+1)} & \hspace{-1mm}S_{i,j,0}^{(t+1)} & \hspace{-1mm}\cdots & \hspace{-1mm}S_{i,j,r^{t+1}-1}^{(t+1)}&\hspace{-1mm}S_{i,j,r^{t+1}-1}^{(t+1)}
\end{pmatrix}
\end{align*}
for $i\in [\min\{tr, k-r\},\min\{(t+1)r, k\})$ and  $j\in [0,n)\backslash\{i\}$
with $S_{i,j,w}^{(t+1)}$ being defined in \eqref{Eqn_S_i,j,w} according to Theorem \ref{Thm repair parity}.

In what follows, we give an example of the  application.

\begin{Example}
For a  $[5,3]$  EVENODD code $\mathcal{Q}_0$, which is binary and has a sub-packetization level of $2$,  the structure of $\mathcal{Q}_0$ can be
depicted as in Table \ref{Ex Table C1} \cite{EVENODD}.

\begin{table}[htbp]
\begin{center}
\caption{A $[5,3]$ EVENODD code $\mathcal{Q}_0$, where SN and PN denote systematic node and parity node, respectively}\label{Ex Table C1}
\begin{tabular}{|c|c|c||c|c|}
\hline
SN $0$  &  SN $1$  &  SN $2$   &PN $0$ &PN $1$  \\
\hline\hline
$a_0$  & $b_{0}$& $c_{0}$&  $a_0+b_0+c_0$& $a_0+b_1+c_0+c_1$ \\
\hline
$a_1$  & $b_{1}$& $c_{1}$&  $a_1+b_1+c_1$& $a_1+b_0+b_1+c_0$ \\
\hline
\end{tabular}
\end{center}
\end{table}

In the following, through three rounds of transformations according to Algorithm \ref{alg1}, we convert $\mathcal{Q}_0$ into an MDS array code with optimal rebuilding access for all nodes. We obtain codes $\mathcal{Q}_1$, $\mathcal{Q}_2$ and $\mathcal{Q}_3$, where $\mathcal{Q}_1$ and $\mathcal{Q}_2$ are shown  in Tables \ref{Ex Table Q2} and \ref{Ex Table Q3}, respectively, while the representation of $\mathcal{Q}_3$ is omitted due to the limited space. For simplicity,
all the permutations in Step 2 of each round
are chosen to be the identity permutation, as R3 of  Theorem \ref{Thm repair sys} is obviously satisfied for the non-naively repaired remainder nodes of the intermediate code $\mathcal{Q}_i$ ($i\in [1,m)$).

\begin{table}[htbp]
\begin{center}
\caption{A $[5,3]$ binary MDS array code $\mathcal{Q}_1$, where systematic nodes 0 and 1 are chosen as the target nodes}\label{Ex Table Q2}
\setlength{\tabcolsep}{1.1pt}
\begin{tabular}{|c|c|c||c|c|}
\hline
SN $0$  &  SN $1$  &  SN $2$   &PN $0$ &PN $1$  \\
\hline\hline
$a_0$  & $b_{0}$& $c_{0}$&  $a_0\hspace{-.3mm}+\hspace{-.3mm}a_2\hspace{-.3mm}+\hspace{-.3mm}a_3\hspace{-.3mm}+\hspace{-.3mm}b_0\hspace{-.3mm}+\hspace{-.3mm}b_1\hspace{-.3mm}+\hspace{-.3mm}c_0$& $a_0+a_2+b_0+c_0+c_1$ \\
\hline
$a_1$  & $b_{1}$& $c_{1}$&  $a_1+a_2+b_0+c_1$& $a_1+a_3+b_1+c_0$ \\
\hline\hline
$a_2$  & $b_{2}$& $c_{2}$&  $a_2+a_3+b_1+b_2+c_2$& $a_2\hspace{-.3mm}+\hspace{-.3mm}a_3\hspace{-.3mm}+\hspace{-.3mm}b_1\hspace{-.3mm}+\hspace{-.3mm}b_3\hspace{-.3mm}+\hspace{-.3mm}c_2\hspace{-.3mm}+\hspace{-.3mm}c_3$ \\
\hline
$a_3$  & $b_{3}$& $c_{3}$&  $a_2+b_0+b_1+b_3+c_3$& $a_2\hspace{-.3mm}+\hspace{-.3mm}b_0\hspace{-.3mm}+\hspace{-.3mm}b_1\hspace{-.3mm}+\hspace{-.3mm}b_2\hspace{-.3mm}+\hspace{-.3mm}b_3\hspace{-.3mm}+\hspace{-.3mm}c_2$ \\
\hline
\end{tabular}
\end{center}
\end{table}

\begin{table*}[htbp]
\begin{center}
\caption{The $[5,3]$ binary MDS array code $\mathcal{Q}_2$ in systematic form, where systematic nodes 1 and 2 are chosen as the target nodes}\label{Ex Table Q3}
\begin{tabular}{|c|c|c||c|c|}
\hline
SN $0$  &  SN $1$  &  SN $2$   &PN $0$ &PN $1$  \\
\hline\hline
$a_0$  & $b_{0}$& $c_{0}$&  $a_0+a_2+a_3+b_0+b_1+b_4+b_5+c_0+c_1$& $a_0+a_2+b_0+b_5+c_1$ \\
\hline
$a_1$  & $b_{1}$& $c_{1}$&  $a_1+a_2+b_0+b_4+c_0$& $a_1+a_3+b_1+b_4+b_5+c_0+c_1$ \\
\hline
$a_2$  & $b_{2}$& $c_{2}$&  $a_2+a_3+b_1+b_2+b_6+b_7+c_2+c_3$& $a_2+a_3+b_1+b_3+b_7+c_3$ \\
\hline
$a_3$  & $b_{3}$& $c_{3}$&  $a_2+b_0+b_1+b_3+b_6+c_2$& $a_2+b_0+b_1+b_2+b_3+b_6+b_7+c_2+c_3$ \\
\hline\hline
$a_4$  & $b_{4}$& $c_{4}$&  $a_4+a_6+a_7+b_5+c_0+c_4$& $a_4+a_6+b_4+b_5+c_1+c_4+c_5$ \\
\hline
$a_5$  & $b_{5}$& $c_{5}$&  $a_5+a_6+b_4+b_5+c_1+c_5$& $a_5+a_7+b_4+c_0+c_1+c_4$ \\
\hline
$a_6$  & $b_{6}$& $c_{6}$&  $a_6+a_7+b_4+b_6+b_7+c_0+c_1+c_3+c_6$& $a_6+a_7+b_4+b_6+c_0+c_1+c_2+c_3+c_6+c_7$ \\
\hline
$a_7$  & $b_{7}$& $c_{7}$&  $a_6+b_5+b_6+c_0+c_2+c_3+c_7$& $a_6+b_5+b_7+c_0+c_2+c_6$ \\
\hline
\end{tabular}
\end{center}
\end{table*}

Note that during the first two rounds of transformations, the alternative pairing technique  in Section \ref{subsection:step2'}  is employed in Step 3, i.e., data modification is carried out at the parity nodes. For example, when applying the generic transformation to the code $\mathcal{Q}_0$,  after Steps 1 and 2, we get an intermediate code  $\mathcal{Q}'_{0}$ as in Table \ref{Ex Table Q1'}, while in Step 3, $b_0,~b_1,~a_2$ and $a_3$ in the parity nodes are respectively replaced by
  $a_2+a_3+b_0+b_1,~    a_2+b_0,~a_2+a_3+b_1,~a_2+b_0+b_1$
according to \eqref{Eqn_g'},
which  leads to the code $\mathcal{Q}_{1}$ in Table \ref{Ex Table Q2}.

\begin{table}[htbp]
\begin{center}
\caption{An intermediate code  $\mathcal{Q}'_{0}$, where SN and PN respectively denote systematic node and parity node}\label{Ex Table Q1'}
\begin{tabular}{|c|c|c||c|c|}
\hline
SN $0$  &  SN $1$  &  SN $2$   &PN $0$ &PN $1$  \\
\hline\hline
$a_0$  & $b_{0}$& $c_{0}$&  $a_0+b_0+c_0$& $a_0+b_1+c_0+c_1$ \\
\hline
$a_1$  & $b_{1}$& $c_{1}$&  $a_1+b_1+c_1$& $a_1+b_0+b_1+c_0$ \\
\hline
\hline
$a_2$  & $b_{2}$& $c_{2}$&  $a_2+b_2+c_2$& $a_2+b_3+c_2+c_3$ \\
\hline
$a_3$  & $b_{3}$& $c_{3}$&  $a_3+b_3+c_3$& $a_3+b_2+b_3+c_2$ \\
\hline
\end{tabular}
\end{center}
\end{table}

It is seen that $\mathcal{Q}_2$ maintains the MDS property. Furthermore, the first four nodes can be respectively optimally repaired by accessing and downloading symbols in rows $\{1,2,5,6\}$, $\{3,4,7,8\}$, $\{1,2,3,4\}$,  $\{5,6,7,8\}$  of Table
\ref{Ex Table Q3}
from all the surviving nodes.

\end{Example}

\subsection{Application From  Binary MDS array Codes With   Optimal Repair   for Systematic Nodes}

In the literature, there are some  binary MDS array codes with optimal repair bandwidth for all the systematic nodes but not all the parity nodes. For example, the $[n=k+2,k]$ MDR code in \cite{MDR} and the two modified
versions \cite{MDR-M}.
In this subsection, we propose the second application of the generic transformation by  presenting Algorithm \ref{alg2}, which can  build a binary MDS array code with optimal repair bandwidth  for all nodes from  one with optimal repair bandwidth for all the systematic nodes but not all the parity nodes.

\begin{algorithm}[htbp]
  \caption{} \label{alg2}
  \begin{algorithmic}[1]
\State Choosing any $[n=k+r,k]$ binary MDS array code $\mathcal{C}_1$  with optimal repair bandwidth for all the systematic nodes but not   all the parity nodes as the base code, let $\alpha'$ denote its sub-packetization level;
\State Designate the $k$ systematic nodes of the   MDS array code $\mathcal{C}_1$ as the remainder nodes while the $r$ parity nodes as the target nodes;
\If {either condition (i) or condition (ii) of Theorem \ref{Thm repair sys} holds for the MDS array code $\mathcal{C}_1$ for some $\delta$ and $N$ with $\alpha'=\delta N$}
\State Let  $\mathcal{C}_2=\mathcal{C}_1$; 
\Else
\State  Combine two instances of the MDS array code $\mathcal{C}_1$ to get a new   code $\mathcal{C}_2$ with sub-packetization level $\alpha''=2\alpha'$; 
\State Set $\delta=1$ and $N=\alpha''$;
\EndIf
\State Applying the generic transformation in Section \ref{subsec generic trans} to the MDS array code $\mathcal{C}_2$ to generate a new MDS array code $\mathcal{C}_3$;
  \end{algorithmic}  	
\end{algorithm}
\begin{table*}[htbp]
\begin{center}
\caption{The  $[6,4]$  MDR-1 code $\mathcal{C}_1$ over $\mathbf{F}_2$}\label{Ex Table MDR0}
\begin{tabular}{|c|c|c|c||c|c|}
\hline
SN $0$  &  SN $1$  &  SN $2$   & SN $3$   &PN $0$ &PN $1$  \\
\hline\hline
$a_0$  & $b_{0}$& $c_{0}$& $d_{0}$&  $a_0+b_0+c_0+d_0$& $a_0+a_3+b_0+c_1+c_4+d_4$ \\
\hline
$a_1$  & $b_{1}$& $c_{1}$& $d_{1}$& $a_1+b_1+c_1+d_1$& $a_1+a_2+b_1+c_5+d_0+d_5$ \\
\hline
$a_2$  & $b_{2}$& $c_{2}$& $d_{2}$& $a_2+b_2+c_2+d_2$& $a_2+b_1+b_2+c_6+d_3+d_6$ \\
\hline
$a_3$  & $b_{3}$& $c_{3}$& $d_{3}$& $a_3+b_3+c_3+d_3$& $a_3+b_0+b_3+c_2+c_7+d_7$ \\
\hline
$a_4$  & $b_{4}$& $c_{4}$& $d_{4}$& $a_4+b_4+c_4+d_4$& $a_0+a_4+a_7+b_0+b_4+c_0+c_5+d_0$ \\
\hline
$a_5$  & $b_{5}$& $c_{5}$&  $d_{5}$&$a_5+b_5+c_5+d_5$& $a_1+a_5+a_6+b_1+b_5+c_1+d_1+d_4$ \\
\hline
$a_6$  & $b_{6}$& $c_{6}$& $d_{6}$& $a_6+b_6+c_6+d_6$& $a_2+a_6+b_2+b_5+b_6+c_2+d_2+d_7$ \\
\hline
$a_7$  & $b_{7}$& $c_{7}$& $d_{7}$& $a_7+b_7+c_7+d_7$& $a_3+a_7+b_3+b_4+b_7+c_3+c_6+d_3$ \\
\hline
\end{tabular}
\end{center}
\end{table*}

Note that in Algorithm \ref{alg2}, if  neither condition (i) nor condition (ii)  of Theorem \ref{Thm repair sys} holds for  $\mathcal{C}_1$, then for $i\in [0,~k)$ and $j\in [0,~r)$,  the repair matrix   $S_{i,k+j}^{\mathcal{C}_2}$  of systematic node $i$ of $\mathcal{C}_2$ has the form
\begin{eqnarray*}
S_{i,k+j}^{\mathcal{C}_2}=\left(
                            \begin{array}{cc}
                              S_{i,k+j}^{\mathcal{C}_1} &  \\
                               & S_{i,k+j}^{\mathcal{C}_1} \\
                            \end{array}
                          \right),
\end{eqnarray*}
according to Line 6, thus R1 of Theorem \ref{Thm repair sys} holds  for $j\in [0,~r)$, and therefore condition (i) holds, where $S_{i,k+j}^{\mathcal{C}_1}$  denotes  the repair matrix of systematic node $i$ of $\mathcal{C}_1$. Thus, Algorithm \ref{alg2} generates  an $[n=k+r,k]$ binary MDS array code with optimal repair bandwidth for all nodes by Theorems \ref{Thm MDS C4}-\ref{Thm repair sys}, while the sub-packetization level is  $2\alpha'$ or $4\alpha'$.

In what follows, we give an example by performing  the application to  the
first version of the modified MDR (MDR-1 for short) code in \cite{MDR-M}, where the two parity nodes of the code are designated as the target nodes.  For the
$[n=k+2,k]$ MDR-1 code with sub-packetization level $\alpha'=2^{k-1}$ in \cite{MDR-M},  we focus on the general case $k\ge 4$, and 
let us first recall the repair strategy of the systematic nodes. 
For an integer $j$, where $j\in [0,2^{k-1})$, let $(j_{k-2},\cdots,j_1, j_{0})$ be its binary expansion, i.e., $j=j_0+2j_1+\cdots+2^{k-2}j_{k-2}$.
When repairing node $i$ ($i\in [0,k)$) of the  MDR-1 code in \cite{MDR-M}, one accesses and downloads the $j$-th element from each surviving node for all  $j\in J_i$, where
{\small
\begin{equation*}
J_i=\left\{ \begin{array}{ll}
\{j=(j_{k-1},\cdots,j_1,j_0)|j_1=i\}, & \textrm{if}~i=0, 1, \\
\{j=(j_{k-1},\cdots,j_1,j_0)|j_0+j_1\in\{0,2\}\}, &  \textrm{if}~i=2,\\
\{j=(j_{k-1},\cdots,j_1,j_0)|j_0+j_1\in\{1,3\}\}, &  \textrm{if}~i=3,\\
\{j=(j_{k-1},\cdots,j_1,j_0)|j_{i-2}=1\}, &  \textrm{if}~3<i<k.\\
\end{array}\right.
\end{equation*}
}
\begin{Example}
For the  $[6,4]$ MDR-1 code $\mathcal{C}_1$ in \cite{MDR-M}, which is binary and has a sub-packetization level of $8$,  the structure of $\mathcal{C}_1$ can be depicted as in Table \ref{Ex Table MDR0}. By applying   the second application to  code $\mathcal{C}_1$,
we can get a $[6,4]$ binary MDS array code $\mathcal{C}_3$ with optimal rebuilding access for all nodes, as shown in Table \ref{Ex Table MDR3}, where $\textbf{f}_0^{(l)}, \textbf{f}_1^{(l)}, \textbf{f}_2^{(l)}, \textbf{f}_3^{(l)}, \textbf{g}_0^{(l)}$, $\textbf{g}_1^{(l)}$ denote an instance of  the  $[6,4]$  MDR-1 code $\mathcal{C}_1$ for $l\in [0,4)$ with $\textbf{g}_0^{(l)}, \textbf{g}_1^{(l)}$ denoting the parity data.


\begin{table}[htbp]
\begin{center}
\caption{A $[6,4]$ binary MDS array code $\mathcal{C}_3$}\label{Ex Table MDR3}
\begin{tabular}{|c|c|c|c||l|c|}
\hline
SN $0$  &  SN $1$  & SN $2$  &  SN $3$  & PN $0$ &PN $1$  \\
\hline\hline
 $\textbf{f}_0^{(0)}$& $\textbf{f}_1^{(0)}$&  $\textbf{f}_2^{(0)}$& $\textbf{f}_3^{(0)}$ &  $\textbf{g}_0^{(0)}$& $\textbf{g}_1^{(0)}+\textbf{g}_0^{(2)}$ \\
\hline
  $\textbf{f}_0^{(1)}$& $\textbf{f}_1^{(1)}$&  $\textbf{f}_2^{(1)}$& $\textbf{f}_3^{(1)}$ &  $\textbf{g}_0^{(1)}$& $\textbf{g}_1^{(1)}+\textbf{g}_0^{(3)}$ \\
\hline
 $\textbf{f}_0^{(2)}$& $\textbf{f}_1^{(2)}$&  $\textbf{f}_2^{(2)}$& $\textbf{f}_3^{(2)}$ &  $\textbf{g}_0^{(2)}+\textbf{g}_1^{(0)}+\textbf{g}_1^{(1)}$& $\textbf{g}_1^{(2)}$ \\
\hline
 $\textbf{f}_0^{(3)}$& $\textbf{f}_1^{(3)}$& $ \textbf{f}_2^{(3)}$& $\textbf{f}_3^{(3)}$ &  $\textbf{g}_0^{(3)}+\textbf{g}_1^{(0)}$& $\textbf{g}_1^{(3)}$ \\
\hline
\end{tabular}
\end{center}
\end{table}

For the code $\mathcal{C}_3$, one can directly verify that the code is MDS. Furthermore,
systematic node $i$ ($i\in [0,4)$) is optimally repaired by accessing and downloading the $j$-th element from the surviving nodes for all  $j\in J_i$, where
\begin{equation*}
J_i\hspace{-.3mm}=\hspace{-.4mm}\left\{\hspace{-2.5mm} \begin{array}{ll}
\{j=(j_4,j_3,j_{2},j_1,j_0)|j_1=i\},& \textrm{if}~i=0, 1, \\
\{j=(j_4,j_3,j_{2},j_1,j_0)|j_0+j_1\in\{0,2\}\},&  \textrm{if}~i=2,\\
\{j=(j_4,j_3,j_{2},j_1,j_0)|j_0+j_1\in\{1,3\}\},&  \textrm{if}~i=3.
\end{array}\right.
\end{equation*}

Parity nodes $0, 1$ can be optimally repaired by accessing and downloading the $j$-th element from the surviving nodes for all  $j\in J_4$ and $J_5$, respectively, where
\begin{equation*}
 J_4=\{j=(j_4,j_3,j_{2},j_1,j_0)|j_4=0\}
\end{equation*} 
and
\begin{equation*}
J_5=\{j=(j_4,j_3,j_{2},j_1,j_0)|j_4=1\}.
\end{equation*}
\end{Example}

\begin{Remark}\label{Rem1}
In Algorithm \ref{alg2}, we can also choose any binary MDS array code  with an efficient but not optimal repair for   systematic nodes as the base code, then Algorithm \ref{alg2} generates  another binary MDS array code with optimal rebuilding access for the parity nodes, while the repair efficiency for the systematic nodes is maintained as in the base code. For example, we can apply Algorithm \ref{alg2} to the $[n=p+2, k=p]$ EVENODD code\footnote{The efficient repair of the systematic nodes follows from the strategy in \cite{wang2010rebuilding}.} with sub-packetization level $p-1$ in \cite{EVENODD}, where $p$ is a prime. 
Whereas, the transformation in  \cite{HouLee}  is only applicable to the  $[p+2, p]$ EVENODD code with $4\mid (p-1)$.
\end{Remark}

\subsection{Comparisons Between the Proposed Transformation and the One in \cite{HouLee}}\label{sec:comp}

The transformation in \cite{HouLee} also relies on a pairing technique and the data stored in each node is regarded as a polynomial, the two   different linear combinations of polynomials $a(x)$ and $b(x)$ are generated as $a(x)+b(x)$ and $a(x)+\delta(x)b(x)$ for some polynomial $\delta(x)$.
Compared to the one in \cite{HouLee},   the generic transformation proposed in this paper has the following advantages:
\begin{itemize}
    \item \textbf{The transformation in this paper
is uniform}. It can be applied to any binary MDS array codes such as the MDS codes in \cite{EVENODD,Blaum_Bruck_Vardy96,EVENODDG,STAR,MDR,MDR-M,EEGad} through a unified pairing technique. Whereas the pairing technique  in the transformation in \cite{HouLee} is not uniform, i.e., the coefficient polynomial $\delta(x)$ is not uniform but should be chosen case by case such that the statements (which were only proved in general in \cite{HouLee} if the base code is the EVENODD code) still hold for the chosen base code.   Whether an appropriate coefficient polynomial $\delta(x)$ can always be found for any existing binary MDS array codes is unknown in \cite{HouLee}.

\item \textbf{The computation complexity of the proposed transformation is very low.} To apply the transformation, only multiple linear operations between two binary column vectors of length $2$ 
is introduced to the base code,   whereas, for the   transformation in \cite{HouLee}, the extra computation cost  introduced to the base code is a linear  operation between  two polynomials in $\mathbf{F}_{2}[x]$ with degree equals to $\alpha'-1$ (where $\alpha'$ denotes the sub-packetization level of the base code), which  needs more XORs and thus results in a higher computation complexity.

For the MDS array codes obtained by the transformation proposed in this work and the one in \cite{HouLee}, 
the reconstruction process and repair process of a target node of the new code are mainly relying on those processes of the base code and some additional XORs.  Therefore, it suffices to compare the extra complexity involved, i.e., the numbers of additional XORs that are needed, which mainly depend on the pairing technique involved. The pairing technique is independent of the base code for the transformation proposed in this work, but depends on the base code for the one in \cite{HouLee}. However, in \cite{HouLee}, the pairing technique is only given to the EVENODD codes in general in Section II, therefore, we take this pairing technique for a quantified comparison of the extra computational complexity introduced to the transformation proposed in this work and the one in \cite{HouLee} in
Table \ref{Table comp_quan}. We see that the transformation proposed in this work indeed has a much lower computational complexity than the one in \cite{HouLee}.

\begin{table*}[htbp]
\begin{center}
\caption{Comparison of the (extra) complexity between the transformations obtained in this paper and the one in \cite{HouLee} (based on the 1st transformation in Section II), where  $t$ denotes the number of target nodes that are connected during the reconstruction
}\label{Table comp_quan}
\begin{tabular}{|c|c|c|c|}
\hline     & Complexity for Pairing &  Extra complexity for reconstruction & Extra complexity for repairing TN  \\
\hline\hline The transformation in \cite{HouLee}   &  $2r(r-1)\alpha'$ XORs &
$\frac{1}{2}(6rt-3t^2-3t)\alpha'$ XORs  &    $4(r-1)\alpha'$ XORs\\
\hline  The proposed transformation   &   $\frac{5r(r-1)}{4}\alpha'$ XORs& $\frac{1}{2}(3rt-t^2-2t)\alpha'$ XORs &  $\frac{5}{2}(r-1)\alpha'$ XORs \\
\hline
\end{tabular}
\end{center}
\end{table*}
\begin{table*}[htbp]
\begin{center}
\caption{Comparison between the transformations obtained in this paper and the one in \cite{HouLee}}\label{Table comp}
\begin{tabular}{|c|c|c|c|}
\hline     & Pairing  technique involved&  Computation complexity &Codes can be applied   \\
\hline\hline The transformation in \cite{HouLee}   &  Case-by-case &
High  &   Some specific codes\\
\hline  The proposed transformation   &   Uniform& Low  &  Any binary MDS array codes\\
\hline
\end{tabular}
\end{center}
\end{table*}


\item \textbf{The proposed transformation has a wider range of applications in the case of $d=n-1$.} It can be applied to any binary MDS array codes, whereas, the applications of the transformation in \cite{HouLee} are limited, whether it can be applied to binary MDS array codes with   optimal repair bandwidth such as the codes in \cite{MDR,MDR-M} is unknown. Particularly, when applying the transformation in \cite{HouLee} one time to the $[n=p+r, k=p]$ EVENODD code to enable optimal rebuilding access for all the parity while keeping the repair efficiency of the systematic nodes, it is valid only if $4\mid (p-1)$ and $r=2$, while ours does not have such restrictions (see Remark \ref{Rem1}).
\end{itemize}
Table \ref{Table comp} summarizes the above comparison.


\begin{Remark}
The transformation in \cite{HouLee}   also works for binary MDS array codes with $d<n-1$, however, when repairing a failed node of the resultant code,  some of the $d$ helper nodes should be specifically selected. Note that by operating on $d-k+1$ instances of the base code in the first step, the newly proposed transformation can be easily generalized to the case of $d<n-1$ with a restricted selection of helper nodes. Thus, instead of involving the imperfect and complicated case of $d<n-1$ in this paper, we would rather  leave the case of $d<n-1$ and   the $d$ helper nodes can be arbitrarily chosen  in  our future work to keep the simplicity of the current paper.
\end{Remark}

\section{Conclusion}
In this work, we proposed a generic transformation that can be applied to any binary MDS array code to produce new binary MDS array codes with some arbitrarily chosen $r$ nodes having optimal repair bandwidth and optimal rebuilding access. Besides, based on this transformation, we provided  two useful applications to yield binary MDS array codes with optimal repair bandwidth/rebuilding access for all nodes. The computation complexity of the  new binary MDS array codes obtained by our generic transformation  is very low,  with only simple XOR operations needed to  perform on the base code to complete the encoding,  decoding, and repair processes.  Extending the transformation to the case of $d<n-1$ with the $d$ helper nodes that can be arbitrarily chosen  is part of our ongoing work.

\section*{Acknowledgment}
The authors would like to thank the Associate Editor Prof. Ido Tal and the three anonymous reviewers for their valuable suggestions and comments, which have greatly improved the presentation and quality of this paper.


\ifCLASSOPTIONcaptionsoff
  \newpage
\fi

%
\begin{IEEEbiography}[{\includegraphics[width=1in,height=1.25in,clip,keepaspectratio]{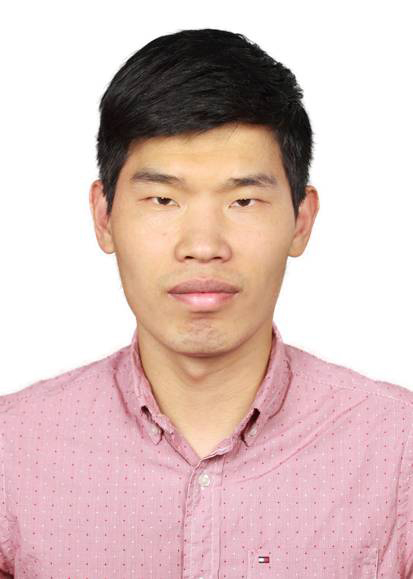}}]
{Jie Li}(Member, IEEE)    
received the B.S. and M.S. degrees in mathematics from Hubei University, Wuhan, China, in 2009 and 2012, respectively, and received the Ph.D. degree from the department of communication engineering, Southwest Jiaotong University, Chengdu, China, in 2017. From  2015 to 2016, he was a visiting Ph.D. student with the Department of Electrical Engineering and Computer Science, The University of Tennessee at Knoxville, TN, USA.  From   2017 to   2019, he was a postdoctoral researcher with the Department of Mathematics, Hubei University, Wuhan, China. From  2019 to 2021, he was a postdoctoral researcher with the Department of Mathematics and Systems Analysis, Aalto University, Finland. He is currently a senior researcher with the Theory Lab, Huawei Tech. Investment Co., Limited, Hong Kong SAR, China. His research interests include private information retrieval, coding for distributed storage, and sequence design.

Dr. Li received the IEEE Jack Keil Wolf ISIT Student Paper Award in 2017.
\end{IEEEbiography}

\begin{IEEEbiography}[{\includegraphics[width=1in,height=1.25in,clip,keepaspectratio]{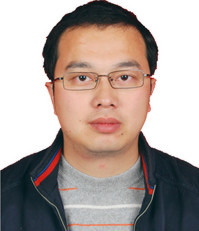}}]
{Xiaohu Tang}(Senior Member, IEEE)  
 received the B.S. degree in applied
mathematics from Northwestern Polytechnical University, Xi’an, China,
in 1992, the M.S. degree in applied mathematics from Sichuan University,
Chengdu, China, in 1995, and the Ph.D. degree in electronic engineering
from Southwest Jiaotong University, Chengdu, in 2001.

From 2003 to 2004, he was a Research Associate with the Department of
Electrical and Electronic Engineering, The Hong Kong University of Science and Technology. From 2007 to 2008, he was a Visiting Professor with the University of Ulm, Germany. Since 2001, he has been with the School of Information Science and Technology, Southwest Jiaotong University, where he is currently a Professor. His research interests include coding theory, network security, distributed storage, and information processing for big data.

Dr. Tang was a recipient of the National Excellent Doctoral Dissertation Award, China, in 2003, the Humboldt Research Fellowship, Germany, in 2007, and the Outstanding Young Scientist Award by NSFC, China, in 2013. He served as an Associate Editors for several journals, including IEEE TRANSACTIONS ON INFORMATION THEORY and \textit{IEICE Transactions on
Fundamentals}, and served for a number of technical program committees
of conferences.
\end{IEEEbiography}

\begin{IEEEbiography}[{\includegraphics[width=1in,height=1.25in,clip,keepaspectratio]{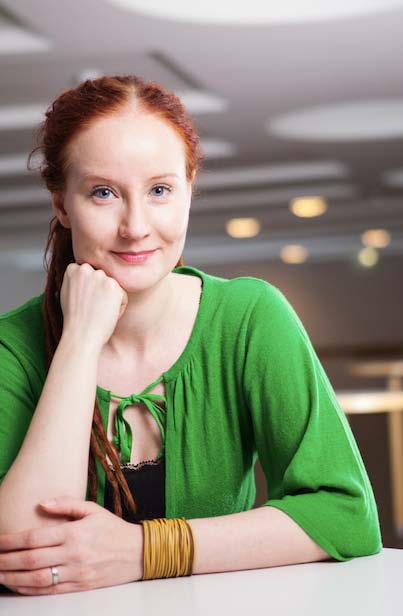}}]
{Camilla Hollanti}(Member, IEEE)     received the M.Sc. and Ph.D. degrees from the University of Turku, Finland, in 2003 and 2009, respectively, both in pure mathematics. Her research interests lie within applications of algebraic number theory to wireless communications and physical layer security, as well as in combinatorial and coding theoretic methods related to distributed storage systems and private information retrieval.

For 2004-2011 Hollanti was with the University of Turku. She joined the University of Tampere as  Lecturer for the academic year 2009-2010. Since 2011, she has been with the Department of Mathematics and Systems Analysis at Aalto University, Finland, where she currently works as Full Professor and Vice Head, and leads a research group in Algebra, Number Theory, and Applications. During 2017-2020, Hollanti was affiliated with the Institute of Advanced Studies at the Technical University of Munich, where she held a three-year Hans Fischer Fellowship, funded by the German Excellence Initiative and the EU 7th Framework Programme.

Hollanti is currently an editor of the AIMS Journal on Advances in Mathematics of Communications,  SIAM Journal on Applied Algebra and Geometry, and IEEE Transactions on Information Theory. She is a recipient of several grants, including six Academy of Finland grants. In 2014, she received the World Cultural Council Special Recognition Award for young researchers. In 2017, the Finnish Academy of Science and Letters awarded her the V\"ais\"al\"a Prize in Mathematics. For 2020-2022, Hollanti is serving as a member of the Board of Governors of the IEEE Information Theory Society, and is one of the General Chairs of IEEE ISIT 2022.
\end{IEEEbiography}




\end{document}